\theoremstyle{plain}
\newtheorem{theorem}{Theorem}
\newtheorem{corollary}[theorem]{Corollary}
\newtheorem{prp}[theorem]{Property}
\newtheorem{conj}[theorem]{Conjecture}
\Crefname{observation}{Observation}{Observations}
\Crefname{algorithm}{Algorithm}{Algorithms}
\Crefname{conjecture}{Conjecture}{Conjectures}
\Crefname{algocf}{Algorithm}{Algorithms}
\Crefname{section}{Section}{Sections}
\Crefname{lemma}{Lemma}{Lemmata}
\Crefname{lemma2}{Lemma}{Lemmata}
\Crefname{note}{Note}{Notes}
\Crefname{claim}{Claim}{Claims}
\Crefname{prp}{Property}{Properties}
\Crefname{prp2}{Property}{Properties}
\Crefname{enumi}{Condition}{Conditions}
\Crefname{conj}{Conjecture}{Conjectures}
\Crefname{figure}{Fig.}{Figs.}
\newcommand{\specialsource}{\includegraphics[page=1]{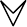}-face\xspace}
\newcommand{\specialsources}{\includegraphics[page=1]{figures/special-faces.pdf}-faces\xspace}
\newcommand{\specialsink}{\includegraphics[page=2]{figures/special-faces.pdf}-face\xspace}
\newcommand{\specialsinks}{\includegraphics[page=2]{figures/special-faces.pdf}-faces\xspace}
\newcommand{\emb}{\ensuremath{\mathcal{E}_\phi}}
\renewcommand{\emph}[1]{{\em \textcolor{blue}{#1}}\xspace}
\title{Recognizing DAGs with Page-Number 2 is NP-complete\footnote{An extended abstract of this work has appeared in the proceedings of the 30th International Symposium on Graph Drawing and Network Visualization (GD2022). This research was partially supported by $\mathrm{\Pi}$EBE 2020 and MIUR Project ``AHeAD'' under PRIN 20174LF3T8.}}
\author{Michael A. Bekos$^1$, Giordano Da~Lozzo$^2$, Fabrizio~Frati$^2$,\\Martin Gronemann$^3$, Tamara~Mchedlidze$^4$, Chrysanthi Raftopoulou$^5$
\\
\medskip
\\
\small$^1$Department of Mathematics, University of Ioannina, Ioannina, Greece\\
\small\texttt{bekos@uoi.gr}
\\
\small$^2$Department of Engineering, Roma Tre University, Italy\\
\small\texttt{giordano.dalozzo@uniroma3.it, fabrizio.frati@uniroma3.it}
\\
\small$^3$Algorithms and Complexity Group, TU Wien, Vienna, Austria\\
\small\texttt{mgronemann@ac.tuwien.ac.at}
\\
\small$^4$Department of Computer Science, Utrecht University, Utrecht, the Netherlands\\
\small\texttt{t.mtsentlintze@uu.nl}
\\
\small$^5$School of Applied Mathematical \& Physical Sciences, NTUA, Athens, Greece\\
\small\texttt{crisraft@mail.ntua.gr}
}
\date{}
\begin{document}

\maketitle

\begin{abstract}
The page-number of a directed acyclic graph (a DAG, for short) is the minimum $k$ for which the DAG has a topological order and a $k$-coloring of its edges such that no two edges of the same color cross, i.e., have alternating endpoints along the topological order. 
In 1999, Heath and Pemmaraju conjectured that the recognition of DAGs with page-number~$2$ is \NP-complete and proved that recognizing DAGs with page-number~$6$ is \NP-complete [%
{\em SIAM J. Computing}, 1999]. Binucci et al.\ recently strengthened this result by proving that recognizing DAGs with page-number~$k$ is \NP-complete, for every $k\geq 3$ [%
{\em SoCG} 2019]. In this paper, we finally resolve Heath and Pemmaraju's conjecture in the affirmative. In particular, our \mbox{\NP-completeness} result holds even for $st$-planar graphs and planar posets.

\medskip\noindent\textbf{Keywords:} Page-number, directed acyclic graphs, planar posets.

\end{abstract}

\section{Introduction}

The problem of embedding graphs in books~\cite{Oll73} has a long history of research with early results dating back to the 1970's. Such embeddings are specified by a linear order of the vertices along a line, called \emph{spine}, and by a partition of the edges into sets, called \emph{pages}, such that the edges in each page are drawn crossing-free in a half-plane delimited by the spine. The \emph{page-number} of a graph is the minimum number of pages over all its book embeddings, while the page-number of a graph family is the maximum page-number over its members.


An important branch of literature focuses on the page-number of planar graphs. An upper bound of $4$ was known since 1986~\cite{DBLP:journals/jcss/Yannakakis89}, while a matching lower bound was only recently proposed~\cite{DBLP:journals/jocg/KaufmannBKPRU20,DBLP:journals/jctb/Yannakakis20}. Better bounds are known for several families of planar graphs~\cite{DBLP:journals/dam/GuanY20,DBLP:conf/focs/Heath84}. A special attention has been devoted to the planar graphs with page-number $2$~\cite{DBLP:journals/algorithmica/BekosGR16,DBLP:journals/mp/CornuejolsNP83,Ewald1973,DBLP:journals/dcg/FraysseixMP95,DBLP:conf/esa/0001K19,DBLP:journals/appml/KainenO07,NC08,DBLP:conf/cocoon/RengarajanM95}. These have been characterized as the subgraphs of the Hamiltonian planar graphs~\cite{doi:10.1137/0608018} and hence are called \emph{subhamiltonian}. Recognizing subhamiltonian graphs turns out to be NP-complete~\cite{Wig82}.


If the input graph is directed and acyclic (a DAG, for short), then the linear vertex order of a book embedding is required to be a \emph{topological order} of it~\cite{DBLP:journals/order/NowakowskiP89}. Heath and Pemmaraju~\cite{DBLP:journals/siamcomp/HeathP99} showed that there exist \emph{planar DAGs} (i.e., DAGs whose underlying graph is planar) whose page-number is linear in the input size. Certain subfamilies of planar DAGs, however, have bounded page-number~\cite{DBLP:conf/gd/AlzohairiR96,DBLP:conf/gd/BhoreLMN21,DBLP:journals/algorithmica/GiacomoDLW06,DBLP:journals/siamcomp/HeathPT99}. Further, it was recently shown that \emph{upward planar graphs} (i.e., DAGs that admit an upward planar drawing, where \emph{upward} means that each edge is represented by a curve whose $y$-coordinates monotonically increase from the source to the sink of the edge)
have sublinear page-number~\cite{DBLP:conf/soda/JungeblutMU22}, improving upon previous bounds~\cite{DBLP:journals/jgaa/FratiFR13}. From an algorithmic point of view, testing whether~a~DAG has page-number $k$ is \NP-complete for every fixed value of $k \geq 3$~\cite{binucci_et_al2019}, linear-time solvable for $k=1$~\cite{DBLP:journals/siamcomp/HeathP99}, and fixed-parameter tractable with respect to the vertex cover number for every $k$~\cite{DBLP:conf/gd/BhoreLMN21} and with respect to the treewidth for \emph{$st$-graphs} (i.e., DAGs with a single source and a single sink) when $k=2$~\cite{binucci_et_al2019}. In contrast to the undirected setting, however, for $k=2$ the complexity question has remained open since 1999, when Heath and Pemmaraju posed the following conjecture.

\begin{conj}[Heath and Pemmaraju~\cite{DBLP:journals/siamcomp/HeathP99}]\label{conj:main}
Deciding whether a DAG has page-number 2 is \NP-complete. 
\end{conj}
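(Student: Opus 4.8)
The plan is to first settle membership in \NP and then establish \NP-hardness by a reduction from a planar, restricted variant of satisfiability. Membership is immediate: a certificate consists of a linear order of the vertices together with a $2$-coloring of the edges, and in polynomial time one checks that the order is a topological order of the DAG and that no two edges of the same color have alternating endpoints along the order. The substance of the theorem is therefore the hardness direction, and the difficulty specific to the directed setting is that the spine order is not free but must be a topological order; this constraint must be exploited, rather than fought, to pin down the behavior of the gadgets.

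For the reduction I would start from an \NP-complete planar variant of $3$-SAT---for instance planar monotone $3$-SAT---whose variable--clause incidence graph comes equipped with a plane rectilinear drawing with variables on a horizontal line and clauses attached from above and below. The global strategy is to translate this drawing into a DAG by replacing each variable and each clause with a gadget and each variable--clause incidence with a directed \textit{wire}, all laid out so that the resulting graph is $st$-planar and, viewed as a poset, planar. To enforce the single-source/single-sink structure I would add a global source $s$ and sink $t$ and route the construction between them, guaranteeing acyclicity by orienting every wire consistently with the left-to-right and bottom-to-top geometry of the starting drawing.

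The heart of the construction is the design of two families of gadgets. A \textit{variable gadget} must admit, in any valid $2$-page book embedding, exactly two combinatorially distinct realizations, which I identify with the truth values \textsc{true} and \textsc{false}; the choice should be forced by a binary decision that a $2$-page layout cannot avoid, such as which of the two pages carries a distinguished edge, or on which side of a fixed vertex a block of vertices is placed. A \textit{clause gadget} must then be $2$-page embeddable if and only if at least one of its incident literal wires arrives in the satisfying state; the natural mechanism is to create a small set of mutually conflicting edges that can be $2$-colored without a monochromatic crossing precisely when one conflict is relieved by a satisfied literal. Propagation of a consistent value along each wire, and consistency between the occurrences of a literal, would be enforced by long nested \textit{chains} of edges whose only crossing-free $2$-page routings transmit the state faithfully.

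Correctness then splits into the two standard directions. For the forward direction I would exhibit an explicit topological order and $2$-page coloring from a satisfying assignment, by setting each variable gadget to its prescribed realization and routing every clause gadget through the page freed up by a satisfied literal. The reverse direction---reading a satisfying assignment out of an \emph{arbitrary} valid $2$-page embedding---is where I expect the main obstacle to lie: one must prove that the gadgets are \textbf{rigid}, i.e., that the topological-order requirement together with the per-page non-crossing requirement leaves no unintended layout, so that every variable gadget is unambiguously \textsc{true} or \textsc{false} and every clause is genuinely satisfied. This rigidity analysis, carried out while simultaneously respecting $st$-planarity and the poset condition---which restrict the admissible embeddings and thus help, but also constrain gadget design---is the technically demanding step; the remaining verification that the construction has polynomial size and is an $st$-planar planar poset is routine.
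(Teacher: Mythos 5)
Your proposal correctly identifies the source problem (\textsc{Planar Monotone 3-SAT}) and the overall architecture (variable gadgets with two rigid states, clause gadgets satisfiable exactly when a literal arrives true, all laid out along the plane drawing of the incidence graph between a global source and sink), and this does match the skeleton of the paper's reduction. However, the proposal stops precisely where the theorem's content begins: no gadget is actually constructed, and the rigidity analysis---which you yourself flag as ``the technically demanding step''---is deferred rather than carried out. As written, this is a plan for a proof, not a proof, so there is a genuine gap.

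The concrete missing idea is the tool that makes the rigidity analysis tractable at all. The paper never reasons directly about page colorings or about ``which page carries a distinguished edge''; that mechanism is fragile, since the two pages can be exchanged globally and a single edge's page assignment is not a stable carrier of a truth value. Instead, the paper reformulates page-number~$2$ via subhamiltonicity: a DAG has page-number~$2$ iff it is a spanning subgraph of an $st$-planar graph with a directed Hamiltonian $st$-path. This converts the problem into augmenting a plane DAG with a Hamiltonian path, and all rigidity then follows from purely planar structural facts (a Hamiltonian $st$-plane graph has only transitive faces and no ``generalized diamond''; hence every rhombus forces its two lateral vertices to be consecutive on the spine, and every non-transitive face admits only a short list of augmenting-edge configurations). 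The variable gadget is a \emph{double ladder} whose unique upward planar embedding plus the rhombus property force exactly two spine orders, and the clause gadget places an anchor vertex in a non-transitive face whose boundary alternates with ``true'' edges of the three variable gadgets, so the anchor can be inserted into the Hamiltonian path iff some literal is true. Without this reformulation---or an equally concrete substitute---your variable and clause gadgets cannot be specified, and the reverse direction of the reduction (extracting an assignment from an arbitrary $2$-page embedding) cannot be argued.
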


\noindent\textbf{Our contribution.} In this work, we settle~\cref{conj:main} in the affirmative. More precisely, in \cref{sec:st-complete}, we show that testing whether an \emph{$st$-planar graph} (i.e., an $st$-graph that admits a planar drawing in which the source and the sink are incident to the outer face) admits a $2$-page book embedding is an \NP-complete problem. In \cref{se:poset}, we further show that the problem is \NP-complete also for \emph{planar posets} (i.e., upward planar graphs with no transitive edges). \cref{sec:preliminaries} contains some definitions and preliminaries, while \cref{sec:conclusions} presents some conclusions and open problems.

\section{Preliminaries} \label{sec:preliminaries}
%


A \emph{combinatorial embedding} of a graph is an equivalence class of planar drawings of the graph, where two drawings are equivalent if they define the same clockwise order of the incident edges at each vertex. A \emph{plane embedding} of a connected graph is an equivalence class of planar drawings of the graph, where two drawings are equivalent if they define the same combinatorial embedding and the same clockwise order of the vertices along the outer face. The \emph{flip} of a plane embedding produces a plane embedding in which the clockwise order of the incident edges at each vertex and the clockwise order of the vertices along the outer face is the reverse of the original one. An \emph{upward planar embedding} is an equivalence class of upward planar drawings of a DAG, where two drawings are equivalent if they define the same plane embedding and the same left-to-right order of the outgoing (and incoming) edges at each vertex. A \emph{plane DAG} is a DAG together with an upward planar embedding. 
It is known that every $st$-planar graph is upward planar~\cite{DBLP:journals/tcs/BattistaT88,DBLP:journals/dm/Kelly87}. 
An \emph{$st$-plane} graph is an $st$-planar graph together with an upward planar embedding in which $s$ and $t$ are incident to the outer face. 

As in the undirected case, a DAG $G$ has page-number $2$ if it is \emph{subhamiltonian}, i.e., it is a spanning subgraph of an $st$-planar graph $\overline{G}$ that has a directed Hamiltonian $st$-path $P$~\cite{DBLP:journals/jgaa/MchedlidzeS11}. In the previous definition, if $G$ has a prescribed plane embedding, we additionally require that the
plane embedding of $\overline{G}$ restricted to $G$ coincides with the one of $G$. We say that $P$ is a \emph{subhamiltonian path} for $G$, and we refer to the edges of $P$ that are not in~$G$ as  \emph{augmenting edges}. Further, $\overline{G}$ is called an \emph{HP-completion} of $G$.

A \emph{generalized diamond} is an $st$-plane graph consisting of three directed paths from a vertex $v_s$ to a vertex $v_t$, one of which is the edge $v_s v_t$ and appears between the other two paths in the upward planar embedding; see \cref{fig:diamonds1}.

\begin{figure}[t]
	\centering
	\begin{subfigure}[b]{.08\textwidth}
		\centering
		\includegraphics[page=6]{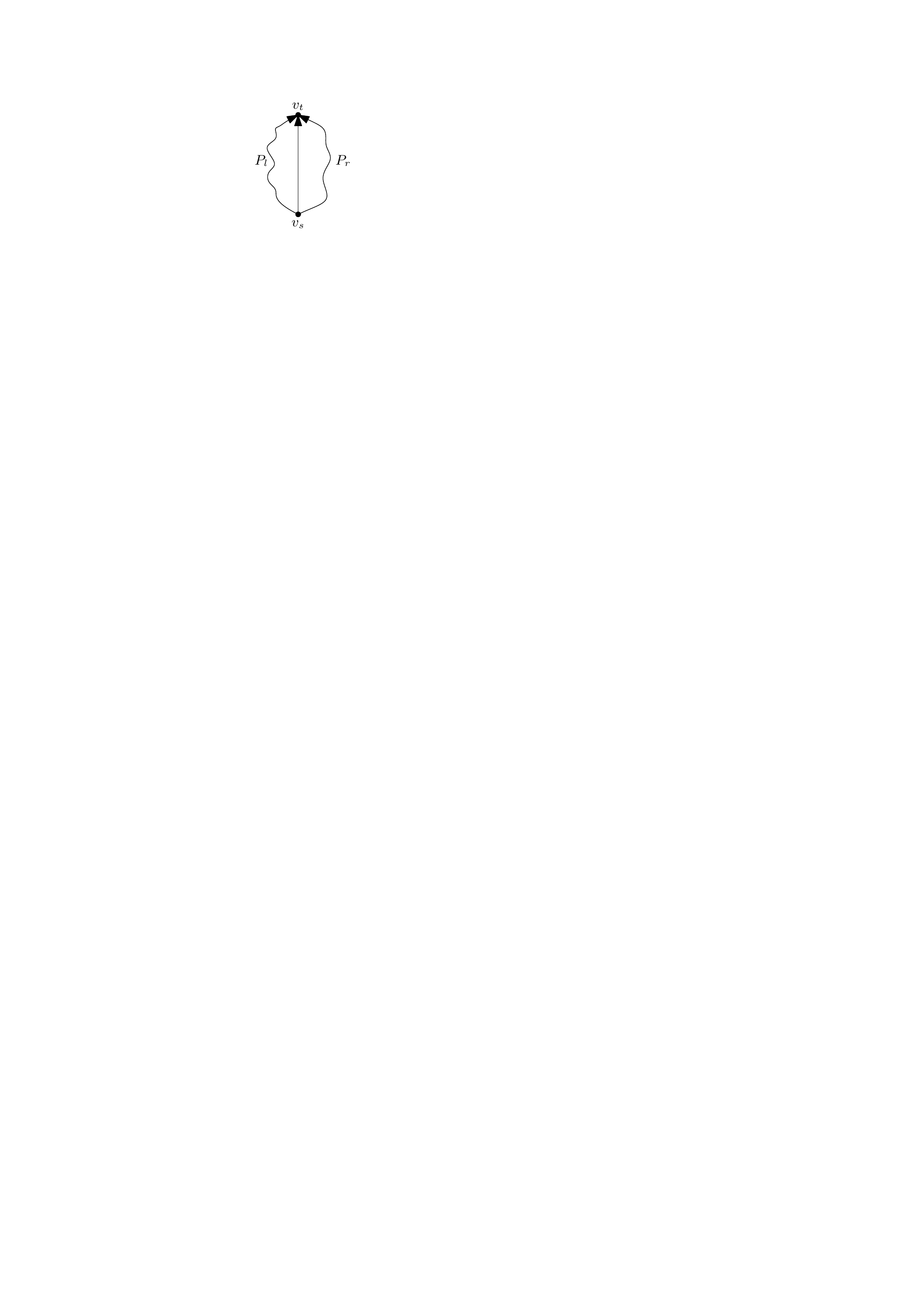}
		\caption{}
		\label{fig:diamonds1}
	\end{subfigure}
	\hfil
	\begin{subfigure}[b]{.08\textwidth}
		\centering
		\includegraphics[page=7]{figures/diamonds.pdf}
		\caption{}
		\label{fig:diamonds2}
	\end{subfigure}
	\hfil
	\begin{subfigure}[b]{.1\textwidth}
		\centering
		\includegraphics[page=8]{figures/diamonds.pdf}
		\caption{}
		\label{fig:diamonds3}
	\end{subfigure}
	\caption{(a)~A generalized diamond, (b)~a non-transitive face, and (c)~a rhombus; curly curves represent paths and straight-line segments represent edges.}
	\label{fig:diamond-nontransitive}
\end{figure}

Unless otherwise specified, by {\em face} of a plane DAG we always mean an {\em internal face}. A face of a plane DAG whose boundary consists of two directed paths is an \emph{$st$-face}. An $st$-face is \emph{transitive} if one of these paths is an edge; otherwise, it is \emph{non-transitive} (see \cref{fig:diamonds2}).
A \emph{rhombus} is a non-transitive $st$-face whose boundary paths have  length $2$; see \cref{fig:diamonds3}.
The following property follows from Theorem 1 in  \cite{DBLP:journals/jgaa/MchedlidzeS11}.

\begin{prp}\label{prop:generalized_diamond}
A Hamiltonian $st$-plane graph contains only transitive faces and no generalized diamond. 
\end{prp}

\noindent From the above property we can deduce the following.

\begin{prp}\label{prop:rhombus_consec}
Let $G$ be a plane DAG and $P$ be a subhamiltonian path for $G$. If~$G$ contains a rhombus $(v_s, v_l, v_r, v_t)$ with source $v_s$ and sink $v_t$, then $P$ contains either the edge $v_l v_r$ or the edge $v_r v_l$, i.e., $v_l$ and $v_r$ are consecutive in $P$.
\end{prp}

\noindent The next property follows directly from Theorem~1 in~\cite{MchedlidzeS09} and Property~\ref{prop:generalized_diamond}.

\begin{prp}\label{prp:non_transitive_augmenting} 
Let $G$ be a plane DAG and $P$ be a subhamiltonian path for $G$. If $G$ contains a non-transitive face $f$ with boundaries $(v_s, w, v_t)$ and $(v_s, v_1,\dots,v_r,\allowbreak v_t)$, then the augmenting edges of $P$ inside $f$ are either (i) the edge $w v_1$, or (ii) the edge $v_r w$, or (iii) edges $v_iw$ and $wv_{i+1}$ for some $1\leq i<r$. 
\end{prp} 
\begin{proof}
Consider any HP-completion $\overline{G}$ of $G$ with subhamiltonian path $P$. By \cref{prop:generalized_diamond}, we have that $\overline{G}$ does not contain any non-transitive face and any generalized diamond. Since $f$ is non-transitive and $\overline{G}$ cannot contain the edge $v_sv_t$ inside $f$, as this would create a generalized diamond with the boundary paths of $f$, it follows that $P$ has at least one augmenting edge inside $f$ that connects $w$ and a vertex of its right boundary. Since $w$ can be incident to at most two edges of $P$, there can be at most two augmenting edges of $P$ inside~$f$. 

\begin{figure}[t]
	\centering
	\begin{subfigure}[b]{.24\textwidth}
		\centering
		\includegraphics[page=1]{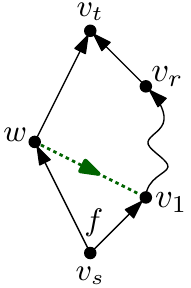}
		\caption{Case (i)}
		\label{fig:face_augmenting_1_edge_a}
	\end{subfigure}
	\hfil
	\begin{subfigure}[b]{.24\textwidth}
		\centering
		\includegraphics[page=2]{figures/faces_augmenting_edges.pdf}
		\caption{Case (ii)}
		\label{fig:face_augmenting_1_edge_b}
	\end{subfigure}
	\hfil
	\begin{subfigure}[b]{.24\textwidth}
		\centering
		\includegraphics[page=3]{figures/faces_augmenting_edges.pdf}
		\caption{Case (iii)}
		\label{fig:face_augmenting_2_edges}
	\end{subfigure}
	\caption{Illustrations for \cref{prp:non_transitive_augmenting}.}
	\label{fig:non_transitive_augmenting}
\end{figure}

If there is only one such edge, again by \cref{prop:generalized_diamond}, this edge must split $f$ into two transitive faces. This is achieved only by the edges $w v_1$ and $v_r w$, which implies cases (i) and (ii) of the statement; see \cref{fig:face_augmenting_1_edge_a,fig:face_augmenting_1_edge_b}. 
On the other hand, if there are two such edges, say $v_i w$ and $w v_j$ with $1\leq i<j\leq r$, then $j=i+1$ holds (refer to \cref{fig:face_augmenting_2_edges}), as otherwise $\overline{G}$ would contain a non-transitive face with left boundary $(v_i, w, v_j)$ and right boundary $(v_i,v_{i+1},\dots,v_j)$, contradicting \cref{prop:generalized_diamond}. Hence case (iii) holds.
\end{proof}

\section{\NP-completeness Proof for Planar $st$-Graphs} \label{sec:st-complete}

\begin{figure}[b]
    \centering
    \includegraphics[page=7]{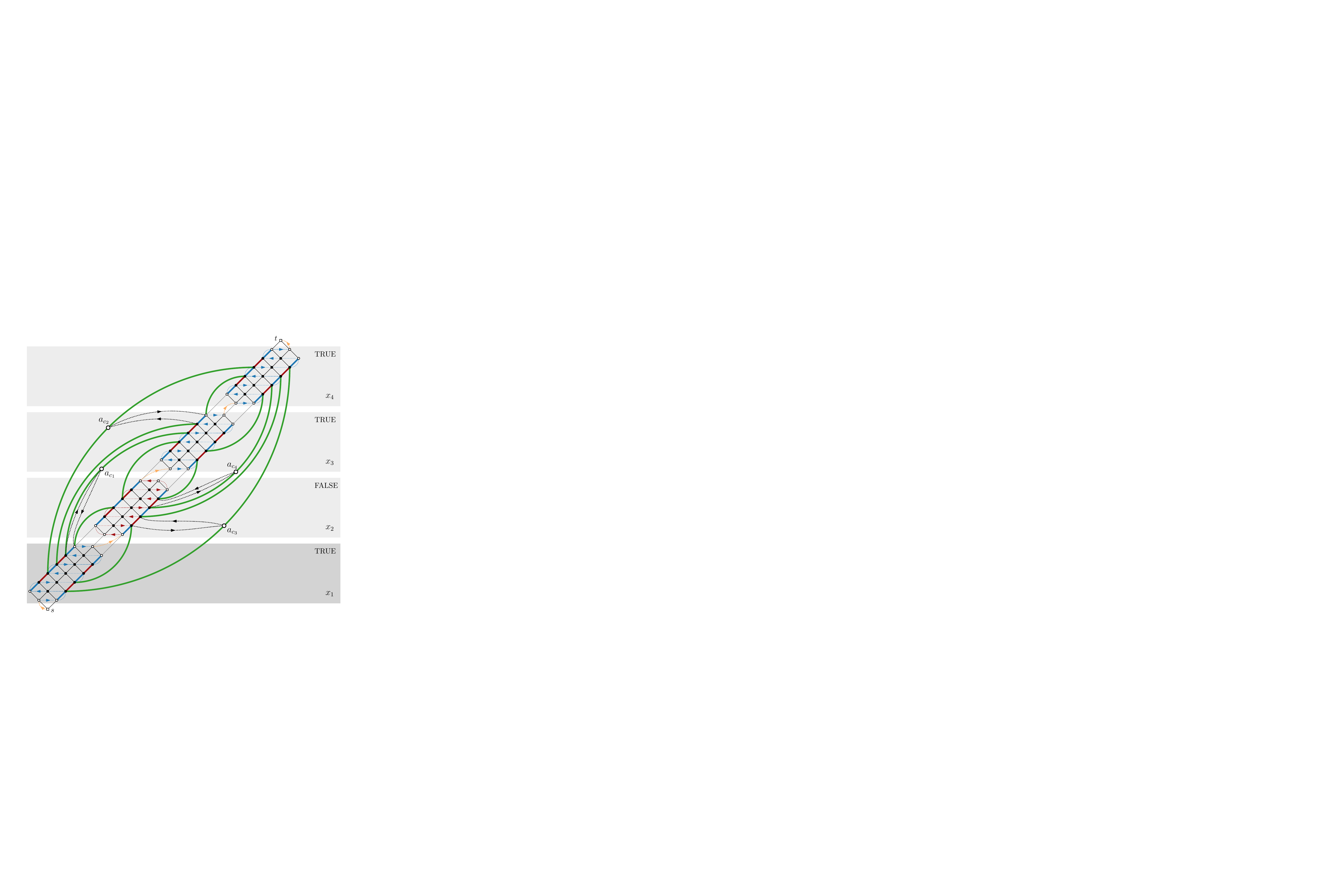}
    \caption{The incidence graph $G_\phi$ of an instance $\phi$ of {\sc Planar Monotone 3-SAT} with $\phi = c_1 \land c_2 \land c_3$, $c_1=(x_1 \lor x_2 \lor x_3)$, $c_2=(x_3 \lor x_4)$, and $c_3=(\overline{x}_1 \lor \overline{ x}_2 \lor \overline{x}_4)$.}
    \label{fig:pm3sat}
\end{figure}

Let $\phi$ be a Boolean $3$-SAT formula with $n$ variables $x_1,\ldots,x_n$ and $m$ clauses $c_1,\dots,c_m$. 
A clause of $\phi$ is \emph{positive} (\emph{negative}) if it has only positive (negative) literals. 
The \emph{incidence graph} $G_\phi$ of $\phi$ is the graph that has \emph{variable vertices} $x_1,\ldots,x_n$, \emph{clause vertices} $c_1,\dots,c_m$, and has an edge $(c_j,x_i)$ for each clause $c_j$ containing $x_i$ or $\overline{x}_i$. 
Note that we use the same notation for variables (clauses) in $\phi$ and variable vertices (clause vertices) in $G_\phi$.
If $\phi$ has clauses with less~than three literals, we introduce parallel edges in $G_\phi$ so that all clause vertices have degree~$3$ in $G_\phi$; see, e.g., the dotted edge $(c_2,x_4)$ in \cref{fig:pm3sat}.
%
The formula~$\phi$ is an instance of the \NP-complete {\sc Planar Monotone 3-SAT} problem~\cite{DBLP:journals/ijcga/BergK12} if each clause of $\phi$ is positive or negative, and $G_\phi$ has a plane embedding~$\emb$ to which the edges of a cycle $\mathcal{C}_\phi:= x_1,\ldots,x_n$ can be added 
that separates positive and negative clause vertices. The problem asks whether~$\phi$ is satisfiable. 

Next, we present~gadgets that we are going to use in a reduction from {\sc Planar Monotone 3-SAT} to the problem of deciding whether a given $st$-planar graph admits a $2$-page book embedding.

\paragraph{Double ladder.} A double ladder of even length $\ell$ is defined as follows; see \cref{fig:ladder}. 
Its vertex set consists of two sources, $s_1$ and $s_2$, two sinks, $t_1$ and $t_2$, and vertices in $\cup_{i=0}^{\ell}\{u_i,v_i,w_i\}$. Its edge set consists of edges $s_1u_0$, $s_1v_0$, $s_2v_0$, $s_2w_0$, $u_{\ell}t_1$, $v_{\ell}t_1$, $v_{\ell}t_2$, $w_{\ell}t_2$, and $\cup_{i=0}^{\ell-1}\{u_iu_{i+1}, v_iu_{i+1}, v_iv_{i+1}, w_iv_{i+1}, w_iw_{i+1}\}$.

\begin{figure}[t]
	\centering
	\begin{subfigure}[b]{.32\textwidth}
		\centering
		\includegraphics[page=6]{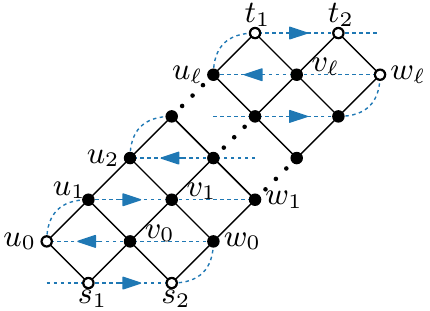}
		\caption{A double ladder of length $\ell$}
		\label{fig:ladder}
	\end{subfigure}
	\begin{subfigure}[b]{.32\textwidth}
		\centering
		\includegraphics[page=1]{figures/variable_gadget.pdf}
		\caption{\texttt{true} assignment}
		\label{fig:variable_gadget_true}
	\end{subfigure}
	\begin{subfigure}[b]{.32\textwidth}
		\centering
		\includegraphics[page=2]{figures/variable_gadget.pdf}
		\caption{\texttt{false} assignment}
		\label{fig:variable_gadget_false}
	\end{subfigure}
	\caption{(a)~A double ladder of length $\ell$, and (b)-(c)~the two subhamiltonian paths of it; edges with no arrow are directed upward, also in subsequent figures.}
	\label{fig:variable_gadget}
\end{figure}

\begin{prp}\label{prp:embedded-ladder}
The double ladder has a unique upward planar embedding (up to a flip),  shown in \cref{fig:ladder}.
\end{prp}
\begin{proof}
The embedding shown in \cref{fig:ladder} clearly is an upward planar embedding of the double ladder. The underlying graph of the double ladder has four combinatorial embeddings, which are obtained from the embedding in \cref{fig:ladder}, by possibly moving the path $u_1u_0s_1v_0$ inside the cycle $u_1u_2v_1v_0$ and the path $w_{\ell-1}w_\ell t_2v_\ell$ inside the cycle $w_{\ell-1}w_{\ell-2} v_{\ell-1}v_\ell$. However, these movements respectively force $s_1v_0$ and $v_\ell t_2$ to point downward, hence the resulting combinatorial embeddings do not correspond to upward planar embeddings. Finally, since the outer face of the embedding in \cref{fig:ladder} is the only face containing at least one source and one sink of the double ladder, the claim follows.
\end{proof}

\begin{prp}
Let $G$ be a plane DAG with a subhamiltonian path $P$. If $G$ contains a double ladder of length $\ell$, then $P$ contains the pattern $[\ldots u_i v_i w_i \ldots w_{i+1}\allowbreak v_{i+1} u_{i+1} \ldots]$ or 
$[\ldots w_i v_i u_i\ldots u_{i+1} v_{i+1} w_{i+1} \ldots]$ for  $i=0,\ldots,\ell-1$.
\end{prp}
\begin{proof}
By \cref{prop:rhombus_consec,prp:embedded-ladder}, we have that $P$ contains either the subpath $u_iv_iw_i$ or the subpath $w_iv_iu_i$, for  $i=0,\ldots,\ell$. The edge $u_iu_{i+1}$ then implies that the vertices $u_i,v_i,w_i$ precede the vertices $u_{i+1},v_{i+1},w_{i+1}$ in $P$. So, it remains to~rule out  patterns $[\ldots u_i v_i w_i \ldots u_{i+1} v_{i+1} \allowbreak w_{i+1}\ldots]$ and $[\ldots w_i v_i u_i \ldots w_{i+1}\allowbreak v_{i+1} \allowbreak u_{i+1}\ldots]$. If $P$ contains one of them, then any book embedding of $G$ in which the order of the vertices along the spine is the one in $P$ requires at least $3$ pages, given that the edges $u_i u_{i+1}$, $v_i v_{i+1}$ and $w_i w_{i+1}$ pairwise cross. This contradicts the fact that $P$ is a subhamiltonian path for $G$. 
\end{proof}

\begin{corollary}\label{cor:double-ladder}
There exist two subhamiltonian paths for the double ladder, shown in \cref{fig:variable_gadget_true,fig:variable_gadget_false}. 
\end{corollary}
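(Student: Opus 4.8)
The plan is to pin down the two candidate orderings from the previous property and then to certify that each is realized by an HP-completion of the double ladder. By the previous property, every subhamiltonian path $P$ lists $u_i, v_i, w_i$ consecutively with $v_i$ in the middle, and these triples alternate orientation between consecutive levels; hence fixing the level-$0$ orientation to $u_0 v_0 w_0$ or to $w_0 v_0 u_0$ determines the interior of $P$ completely (because $\ell$ is even), consecutive triples being joined through the forced original edge $w_i w_{i+1}$ or $u_i u_{i+1}$. The two resulting orderings are interchanged by the flip of \cref{prp:embedded-ladder}; prepending the two sources and appending the two sinks as in \cref{fig:variable_gadget_true,fig:variable_gadget_false} yields the two orderings whose subhamiltonicity I must establish.

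To show that the ordering of \cref{fig:variable_gadget_true} is a subhamiltonian path, I would exhibit its HP-completion $\overline{G}$ explicitly. Every cell of the double ladder is a rhombus, so by \cref{prop:rhombus_consec} (equivalently \cref{prp:non_transitive_augmenting}) the augmenting edge inside each cell joins its left and right vertices; oriented according to the chosen level orientations, these are exactly the within-level steps $u_i v_i$ and $v_i w_i$ of $P$, and they lie in pairwise distinct internal faces. Adding them, together with a single augmenting edge merging the two sources and a single one merging the two sinks along the outer boundary, produces a graph with a single source, a single sink, and $P$ as a directed Hamiltonian $st$-path. It then remains to check that $\overline{G}$ is $st$-planar: starting from the unique upward embedding of \cref{prp:embedded-ladder}, I would perturb the vertex heights within each level so that the chosen orientation becomes strictly increasing, while keeping the between-level gap larger than this perturbation; then every augmenting edge points upward, and, since the cell diagonals sit in distinct faces and the two merging edges lie on the outer face, no crossing is created. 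This is precisely the drawing of \cref{fig:variable_gadget_true}.

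The ordering of \cref{fig:variable_gadget_false} is then subhamiltonian by symmetry: the flip of \cref{prp:embedded-ladder} carries the HP-completion just constructed to a valid HP-completion of the mirrored ordering. I expect the one genuinely delicate point to be this $st$-planarity check, and specifically the verification that the alternation of orientations is exactly what lets the between-level edges $w_i w_{i+1}$ and $u_i u_{i+1}$ join consecutive triples without colliding with the slanted rungs $v_i u_{i+1}$ and $w_i v_{i+1}$; this is the structural reason the two paths exist for even $\ell$, and it is what \cref{fig:variable_gadget_true,fig:variable_gadget_false} depict. Alternatively, one may skip the completion and certify the spine order directly, two-coloring the double-ladder edges into those lying to the left and to the right of $P$ in the upward drawing, which is automatically a valid $2$-page assignment.
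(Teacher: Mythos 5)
Your overall strategy matches the paper's: the corollary is stated there without a written proof, the intended justification being exactly what you spell out, namely that the preceding property reduces the candidates to two orderings and the figures exhibit HP-completions realizing both. Your explicit construction for the first ordering is correct: all $2\ell+2$ internal faces of the double ladder are rhombi, the within-level diagonals $u_iv_i$ (or $v_iu_i$) and $v_iw_i$ (or $w_iv_i$) lie in pairwise distinct faces, the two extra edges joining $s_1,s_2$ and $t_1,t_2$ fit in the outer face, and the height-perturbation argument for upwardness goes through because there are no original edges within a level.

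The step that fails is the symmetry argument for the second ordering. The flip of \cref{prp:embedded-ladder} is a mirror reflection of the \emph{embedding}, not a graph automorphism: the map $u_i\leftrightarrow w_i$, $s_1\leftrightarrow s_2$, $t_1\leftrightarrow t_2$ sends the edge $v_iu_{i+1}$ to $v_iw_{i+1}$, which is not an edge of the double ladder (both families of diagonal edges, $v_iu_{i+1}$ and $w_iv_{i+1}$, slant the same way, so the graph is chiral). Consequently, flipping your HP-completion of the first ordering just produces a mirrored drawing of an HP-completion of the \emph{same} Hamiltonian path; it does not certify the second ordering. The repair is immediate and requires no new idea: run your explicit construction a second time with the within-level diagonals oriented the other way (level $0$ as $u_0v_0w_0$, etc.) and with the outer-face augmenting edges $s_2s_1$ and $t_2t_1$; the same distinct-faces and height-perturbation arguments apply verbatim. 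With that replacement the proof is complete and agrees with what \cref{fig:variable_gadget_true,fig:variable_gadget_false} depict.
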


\paragraph{Variable gadget:} Let $x\in\{x_1,\ldots,x_n\}$. The variable gadget $L_x$ for $x$ is the double ladder of length $4d_x$, where $d_x$ is the degree of $x$ in $G_\phi$. To distinguish between vertices of different variable gadgets, we denote the vertices of $L_x$ with the superscript $x$, as in \cref{fig:connector_gadget}. 
Vertices $s^x_1$, $s^x_2$, $u^x_0$ are the \emph{bottom connectors} and $w^x_{4d_x}$, $t^x_1$,  $t^x_2$ are the \emph{top connectors} of $L_x$.
The two subhamiltonian paths of \cref{cor:double-ladder} correspond to the truth assignments of $x$;  \cref{fig:variable_gadget_true} corresponds to $\texttt{true}$, while  \cref{fig:variable_gadget_false} to $\texttt{false}$. Also, we refer to the edges of $L_x$ that are part of the subhamiltonian path of  \cref{fig:variable_gadget_true} (of \cref{fig:variable_gadget_false}) as \emph{true edges} (\emph{false edges}, respectively). 
In particular, $u^x_{2j}u^x_{2j+1}$ and  $w^x_{2j+1}w^x_{2j+2}$ are true edges of $L_x$, while $u^x_{2j+1}u^x_{2j+2}$ and $w^x_{2j}w^x_{2j+1}$ are false edges of $L_x$, for $j=0,\dots,2d_x-1$. 

\paragraph{Connector gadget:} A connector gadget \emph{joins} two variable gadgets $L_x$ and $L_y$ by means of three paths from the top connectors of $L_x$ to the bottom connectors of $L_y$; see \cref{fig:connector_gadget}. These paths are: the edge $t^x_1u^y_0$, the length-$2$ path $t^x_2 \rho_{x,y} s^y_1$, where $\rho_{x,y}$ is a newly introduced vertex, and the edge $w^x_{4d_x}s^y_2$. 

\begin{figure}[t]
\begin{subfigure}[b]{.48\textwidth}
    \centering
    \includegraphics[page=1]{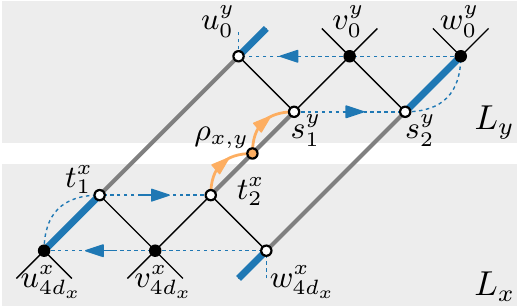}
    \caption{$L_x \leftarrow \texttt{true}$; $L_y \leftarrow \texttt{true}$}
    \label{fig:connector_gadget_TT}
\end{subfigure}
\hfil
\begin{subfigure}[b]{.48\textwidth}
    \centering
    \includegraphics[page=3]{figures/connector_gadget.pdf}
    \caption{$L_x \leftarrow \texttt{false}$; $L_y \leftarrow \texttt{false}$}
    \label{fig:connector_gadget_FF}
\end{subfigure}

\medskip

\begin{subfigure}[b]{.48\textwidth}
    \centering
    \includegraphics[page=2]{figures/connector_gadget.pdf}
    \caption{$L_x \leftarrow \texttt{true}$; $L_y \leftarrow \texttt{false}$}
    \label{fig:connector_gadget_TF}
\end{subfigure}
\hfil
\begin{subfigure}[b]{.48\textwidth}
    \centering
    \includegraphics[page=4]{figures/connector_gadget.pdf}
    \caption{$L_x \leftarrow \texttt{false}$; $L_y \leftarrow \texttt{true}$}
    \label{fig:connector_gadget_FT}
\end{subfigure}
\caption{A subhamiltonian path for the graph composed of two variable gadgets joined by a connector gadget, if the subhamiltonian paths for the variable gadgets represent (a)-(b) the same truth assignment, and (c)-(d) the opposite truth assignment.}\label{fig:connector_gadget}
\end{figure}


\begin{prp}\label{prop:connector}
Given subhamiltonian paths $P_x$ for $L_x$ and $P_y$ for $L_y$, there is a subhamiltonian path $P$ containing $P_x$ and $P_y$ for the graph obtained by joining $L_x$ and $L_y$ by means of a connector gadget.
\end{prp}
\begin{proof}
Each of $P_x$ and $P_y$ is one of the two subhamiltonian paths of \cref{cor:double-ladder}; see \cref{fig:variable_gadget}. In particular, the last vertex of $P_x$ is $t_1^x$ or $t_2^x$, and the first vertex of $P_y$ is $s_1^y$ or $s_2^y$. We obtain $P$ by adding a directed edge from the last vertex of $P_x$ to $\rho_{x,y}$ and a direct edge from $\rho_{x,y}$ to the first vertex of $P_y$, as illustrated in \cref{fig:connector_gadget}.
\end{proof}

\paragraph{Clause gadget:} Let $c$ be a positive (negative) clause of $\phi$. Assume that the variables $x$, $y$ and $z$ of $c$ appear in this order along $\mathcal{C}_\phi$, when traversing $\mathcal{C}_\phi$ from $x_1$ towards $x_n$. In $\emb$, the edges between $x$ and the positive (negative) clause vertices of $G_\phi$ appear consecutively around $x$. Assume that the edge $(c,x)$ is the $(i+1)$-th such edge in a clockwise (counter-clockwise) traversal of the edges around $x$ starting at the edge of $\mathcal{C}_\phi$ incoming $x$; note that $i\in \{0,\dots,d_x-1\}$. Similarly, define indices $j$ and $k$ for $y$ and $z$, respectively. For convenience, we refer to $i$, $j$ and $k$ as the \emph{clause-indices} of $x$, $y$ and $z$, respectively. Let $L_x$, $L_y$, and $L_z$ be the three variable gadgets for $x$, $y$, and $z$, respectively. 

\begin{figure}[htb]
	\centering
	\begin{subfigure}[b]{.48\textwidth}
		\includegraphics[page=1]{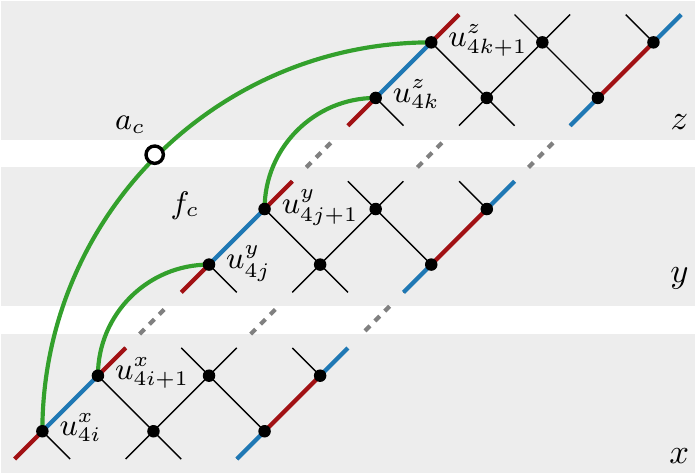}
		\caption{$(x \lor y \lor z)$}
		\label{fig:positive_clause_gadget}
	\end{subfigure}
	\hfil
	\begin{subfigure}[b]{.48\textwidth}
		\includegraphics[page=2]{figures/clause_gadget.pdf}
		\caption{$(\overline{x} \lor \overline{y} \lor \overline{z})$}
		\label{fig:negative_clause_gadget}
	\end{subfigure}
	\caption{Clause gadgets for (a)~a positive clause and (b)~a negative clause.}
	\label{fig:clause_gadget}
\end{figure}

The clause gadget $C_c$ for $c$ consists of an \emph{anchor vertex}~$a_c$, and four edges. If $c$ is positive, these edges are $u^x_{4i}a_c$, $a_cu^z_{4k+1}$, $u^x_{4i+1}u^y_{4j}$ and $u^y_{4j+1}u^z_{4k}$ (green~in \cref{fig:positive_clause_gadget}); otherwise, they are $w^x_{4i}a_c$, $a_cw^z_{4k+1}$, $w^x_{4i+1}w^y_{4j}$ and $w^y_{4j+1}w^z_{4k}$ (green~in \cref{fig:negative_clause_gadget}).  Note that $C_c$ creates a non-transitive face $f_c$, called \emph{anchor face}, whose boundary is delimited by the two newly-introduced edges incident to $a_c$ and by a directed path whose edges alternate between three true edges (if $c$ is positive) or three false edges (if $c$ is negative) and the two newly-introduced edges not incident to $a_c$. We refer to the three true (or false) edges on the boundary of $f_c$ stemming from $L_x$, $L_y$, and $L_z$ as the \emph{base-edges} of $C_c$.
The length of the double ladders ensures that, if $x=y$ (which implies that $j=i+1$), then vertices $u^x_{4i+1}$ and $u^y_{4j}$ ($w^x_{4i+1}$ and $w^y_{4j+1}$) are not adjacent in $L_x$ and the edge $u^x_{4i+1}u^y_{4j}$ ($w^x_{4i+1}w^y_{4j+1}$) is well defined; this is the reason that we do not use vertices with indices $2,3 \mod 4$.

\medskip\noindent We are now ready to prove the main theorem of this section.
\newpage

\begin{theorem}\label{thm:st}
Recognizing whether a DAG has page-number~$2$ is \NP-complete, even if the input is an $st$-planar graph.
\end{theorem}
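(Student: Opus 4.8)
The plan is to prove \NP-completeness by first observing membership in \NP, then giving a polynomial-time reduction from {\sc Planar Monotone 3-SAT} that combines the three gadgets already defined. Membership follows since, given a topological order and a $2$-coloring of the edges, one can verify in polynomial time that no two edges of the same color cross; equivalently, one may guess the subhamiltonian path together with the page assignment and check validity. The substance is the reduction, so I would focus the argument there.

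\medskip\noindent\textbf{The construction.} Starting from a {\sc Planar Monotone 3-SAT} instance $\phi$ together with its plane embedding $\emb$ of $G_\phi$ and the separating cycle $\mathcal{C}_\phi = x_1,\dots,x_n$, I would build an $st$-planar graph $H_\phi$ as follows. For each variable $x_i$ introduce the variable gadget $L_{x_i}$ (a double ladder of length $4d_{x_i}$). Chain these gadgets along the spine direction in the order $x_1,\dots,x_n$, using a connector gadget to join the top connectors of $L_{x_i}$ to the bottom connectors of $L_{x_{i+1}}$; this encodes the cycle $\mathcal{C}_\phi$ and makes the composite graph a single $st$-graph with a global source and sink. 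For each clause $c$ with variables $x,y,z$, attach the clause gadget $C_c$ using the clause-indices $i,j,k$ that were computed from $\emb$, placing positive clause gadgets on one side and negative clause gadgets on the other side of the spine, mirroring the planar separation of $\mathcal{C}_\phi$ in $\emb$. I would verify that the resulting graph is $st$-planar: each gadget is $st$-plane with a unique upward embedding (\cref{prp:embedded-ladder}), the connectors respect the top/bottom connector interfaces, and the clause gadgets can be drawn in the appropriate half of the embedding without crossings because {\sc Planar Monotone 3-SAT} guarantees the clauses embed on the correct sides of $\mathcal{C}_\phi$.

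\medskip\noindent\textbf{Correctness.} The equivalence $\phi \text{ satisfiable} \iff H_\phi \text{ has page-number } 2$ splits into two directions. For the forward direction, given a satisfying assignment, I set each $L_{x_i}$ to the subhamiltonian path of \cref{fig:variable_gadget_true} or \cref{fig:variable_gadget_false} according to the truth value, link them through the connectors (\cref{prop:connector}), and then route the two augmenting edges incident to each anchor vertex $a_c$ inside the anchor face $f_c$. Here the key point is that a satisfying literal in $c$ supplies a base-edge of the right type (a true edge for a positive clause, a false edge for a negative clause) that is \emph{not} already used by the variable gadget's subhamiltonian path, leaving a gap into which $a_c$ can be inserted as a consecutive vertex; \cref{prp:non_transitive_augmenting} tells us exactly that $a_c$ must attach via a pattern $v_i a_c v_{i+1}$ using a free base-edge position, and such a position exists precisely when the clause is satisfied. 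For the reverse direction, I would argue that any subhamiltonian path $P$ of $H_\phi$ induces a consistent truth assignment: by \cref{prop:rhombus_consec} applied to the rhombi of each double ladder, $P$ restricted to each $L_{x_i}$ must be one of the two canonical paths, defining a truth value; the connectors force no interference between variables; and for each clause gadget, \cref{prp:non_transitive_augmenting} forces $a_c$ into the anchor face through one of the three base-edges, which is available only if the corresponding literal is satisfied, so every clause is satisfied by the induced assignment.

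\medskip\noindent\textbf{Main obstacle.} I expect the hardest part to be the clause-satisfaction argument in the reverse direction: showing that the anchor vertex $a_c$ can be inserted into $f_c$ \emph{if and only if} at least one of the three base-edges is oriented in the ``satisfying'' way by the variable gadget's subhamiltonian path. This requires a careful case analysis of how the two augmenting edges at $a_c$ interact with the alternating structure of the anchor-face boundary (three base-edges interleaved with two connecting edges), using \cref{prp:non_transitive_augmenting} to rule out all placements except those consuming a base-edge that is free in $P$. The definitions of \emph{true edges} and \emph{false edges}, together with the careful index choice avoiding residues $2,3 \bmod 4$, are precisely what make a base-edge available exactly when its variable satisfies the clause; I would need to check this alignment rigorously across all combinations of clause sign and variable assignment. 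The planarity verification and the forward direction are comparatively routine given the gadget figures, so the crux is nailing down this biconditional at each anchor face.
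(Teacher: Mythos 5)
Your overall plan coincides with the paper's: membership in \NP{} by guess-and-check, a reduction from {\sc Planar Monotone 3-SAT} that chains the variable gadgets with connectors and attaches a clause gadget per clause, and correctness via \cref{cor:double-ladder}, \cref{prop:connector} and \cref{prp:non_transitive_augmenting}. However, the mechanism you describe at the anchor face is inverted. You write that a satisfying literal supplies a base-edge that is \emph{not} already used by the variable gadget's subhamiltonian path, ``leaving a gap'' for $a_c$. The opposite is what makes the construction work: for a positive clause, the base-edge $u^x_{4i}u^x_{4i+1}$ is a true edge, hence it is an edge of the path of \cref{fig:variable_gadget_true} exactly when $x$ is \texttt{true}; only then are its endpoints consecutive in $P$, and $a_c$ is spliced in by replacing the transition $u^x_{4i}\to u^x_{4i+1}$ with $u^x_{4i}\to a_c\to u^x_{4i+1}$ inside $f_c$. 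If the base-edge were not traversed by $P$, its endpoints would not be consecutive and no such splice would be available. Taken literally, your version would let $a_c$ be inserted precisely when the clause is unsatisfied, reversing the reduction.

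Your reverse direction also has a concrete gap. \cref{prp:non_transitive_augmenting} does not force $a_c$ to attach to the endpoints of a base-edge: its case (iii) allows $a_c$ to attach to \emph{any} consecutive pair on the right boundary of $f_c$, and that boundary also contains the cross-ladder pairs $(u^x_{4i+1},u^y_{4j})$ and $(u^y_{4j+1},u^z_{4k})$, which are joined by edges of $C_c$. When $x$, $y$, $z$ are all \texttt{false}, the degree-two argument (each boundary vertex of a ladder already has both its path edges inside that ladder) rules out cases (i), (ii) and the base-edge pairs, but not the cross-ladder pairs; the paper disposes of these by observing that attaching $a_c$ to, say, $u^x_{4i+1}$ and $u^y_{4j}$ creates a generalized diamond together with the edge $u^x_{4i+1}u^y_{4j}$, contradicting \cref{prop:generalized_diamond}. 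This step is exactly the ``careful case analysis'' you defer, and without it the biconditional at the anchor face does not follow.
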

\begin{proof}
The problem clearly belongs to \NP, as a non-deterministic Turing machine can guess an order of the vertices of an input graph and a partition of its edges into two pages, and check in polynomial time whether the order is a topological order and if so, whether any two edges in the same page cross.

Given an instance 
$\phi$ of {\sc Planar Monotone $3$-SAT}, we construct in polynomial time an $st$-planar graph $H$ that has page-number $2$ if and only if $\phi$ \mbox{is satisfiable; see \cref{fig:construction_example}.} We consider the variable gadgets $L_{x_1}$, $\dots$, $L_{x_n}$, where $x_1, \dots, x_n$ is the order of the variables of $\phi$ along the cycle $\mathcal{C}_\phi$; for $i=1,\dots,n-1$, we connect $L_{x_i}$ with $L_{x_{i+1}}$ using a connector gadget. 
For each positive (negative) clause $c$ of $\phi$, we add a clause gadget $C_c$. 
By the choice of the clause-indices, we can deduce that the resulting graph is a plane DAG containing two sources $s^{x_1}_1$ and $s^{x_1}_2$ and two sinks $t^{x_n}_1$ and $t^{x_n}_2$. We add a source $s$ connected with outgoing edges to $s^{x_1}_1$ and $s^{x_1}_2$, and a sink $t$ connected with incoming edges from $t^{x_n}_1$ and $t^{x_n}_2$. The constructed graph $H$ is $st$-planar. Since  the underlying graph of $H$ is a subdivision of a triconnected planar graph and since only one face of $H$ contains $s$ and $t$, it follows that $H$ has a unique upward planar embedding. We next prove that $\phi$ is satisfiable if and only if $H$ is subhamiltonian (and therefore has page-number $2$).

For the forward direction, assuming that $\phi$ admits a satisfying truth assignment, we show how to construct a subhamiltonian path $P$ for $H$.
For $i=1,\dots,n$, we have that $P$ contains the subhamiltonian path $P_i$ for $L_{x_i}$ shown in \cref{fig:variable_gadget_true} if $x_i$ is \texttt{true}, and the one shown in \cref{fig:variable_gadget_false} otherwise. By \cref{prop:connector}, there is a subhamiltonian path $P$ for the subgraph of $H$ induced by the vertices of all variable and connector gadgets, containing $P_1,\dots,P_n$ as subpaths.
The path $P$ starts from a source of $L_{x_1}$ and ends at a sink of $L_{x_n}$; hence we can extend $P$ to include $s$ and $t$ as its first and last vertices, respectively. 
We now extend $P$ to a subhamiltonian path for $H$ by including the anchor vertex of each clause gadget.
Consider a positive clause $c=(x\lor y\lor z)$ with anchor vertex $a_c$; the case of a negative clause is analogous. As $\phi$ is satisfied, at least one of $x$, $y$ and $z$ is \texttt{true}; assume that $x$ is \texttt{true}, as the other two cases are analogous. By construction, the anchor face $f_c$~of~$C_c$ is non-transitive, with the anchor vertex $a_c$ on its left boundary, and the three base-edges of $C_c$ along its right boundary.   
Recall that each of these base-edges belong to $L_x$, $L_y$, and $L_z$, respectively.
Let $i\in \{0,\dots,d_x-1\}$ be such that
$u^x_{4i}u^x_{4i+1}$ is the (true) base-edge of $L_x$ on the right boundary of $f_c$. Since $x$ is \texttt{true}, the vertices $u^x_{4i}$ and $u^x_{4i+1}$ are consecutive in $P$. We extend $P$ by visiting vertex $a_c$ after $u^x_{4i}$ and before $u^x_{4i+1}$. This corresponds to adding two augmenting edges $u^x_{4i}a_c$ and $a_cu^x_{4i+1}$ of $P$ in the interior of $f_c$; see the black dashed edges of \cref{fig:construction_example}. 
At the end of this process, $P$ is extended to a subhamiltonian path for~$H$. 

\begin{figure}[t!]
\centering
\includegraphics[page=6, width=\textwidth]{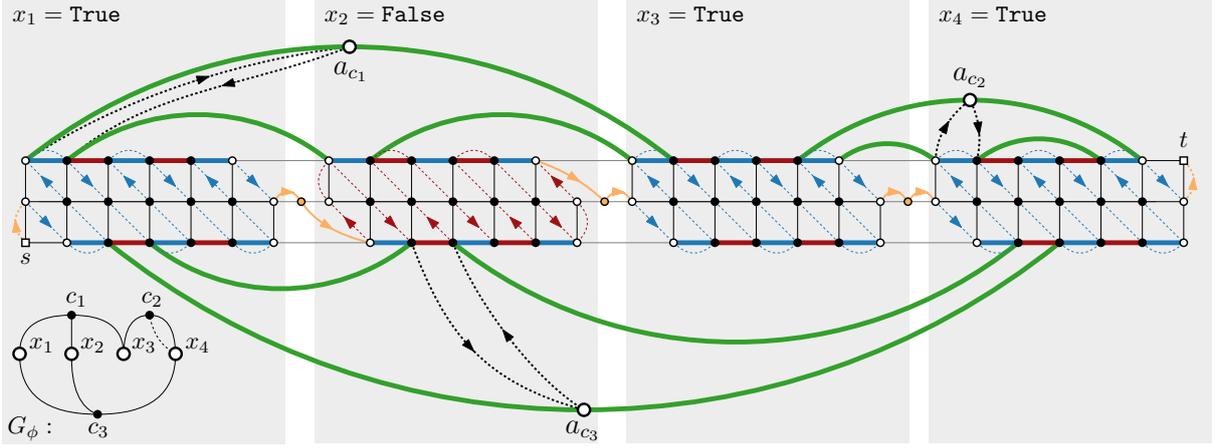}
\caption{The graph $H$ obtained from the instance of {\sc Planar Monotone $3$-SAT} illustrated in \cref{fig:pm3sat}. 
For space reasons, the double ladders of 
the variable gadgets have smaller length and the drawing is rotated by $45^\circ$.}\label{fig:construction_example}
\end{figure}

For the other direction, assume that $H$ is subhamiltonian and let $P$ be  subhamiltonian path for it. 
For each variable gadget $L_{x_i}$, $P$ induces a subhamiltonian path $P_{i}$ for $L_{x_i}$. By \cref{cor:double-ladder}, $P_i$ is one of the two subhamiltonian paths of \cref{fig:variable_gadget}. We assign to $x_i$ the value \texttt{true} if $P_i$ is the path of \cref{fig:variable_gadget_true} and \texttt{false} if $P_i$ is the path of  \cref{fig:variable_gadget_false}. We claim that this truth assignment satisfies $\phi$. Assume, for a contradiction, that there exists a clause $c$ that is not satisfied. Also assume that $c$ is a positive clause  $(x\lor y\lor z)$ (the case of $c$ being negative is analogous). Without loss of generality, we can further assume that $x$, $y$ and $z$ appear in this order in $C_{\phi}$, 
and that the base-edges of the clause gadget $C_c$ along the right boundary of the anchor face $f_c$ are the true edges $u^x_{4i}u^x_{4i+1}$, $u^y_{4j}u^y_{4j+1}$, and $u^z_{4k}u^z_{4k+1}$ of $L_x$, $L_y$, and $L_z$, respectively. 
Since clause $c$ that is not satisfied, $x$, $y$ and $z$ are \texttt{false}, which implies that the corresponding subhamiltonian paths $P_x$, $P_y$ and $P_z$ of $L_x$, $L_y$ and $L_z$ are the ones of \cref{fig:variable_gadget_false}. Hence, $P$ contains the augmenting edges $u^x_{4i}v^x_{4i}$ and $v^x_{4i+1}u^x_{4i+1}$ of $P_x$, $u^y_{4j}v^y_{4j}$ and $v^y_{4j+1}u^y_{4j+1}$ of $P_y$ and  $u^z_{4k}v^z_{4k}$ and $v^z_{4k+1}u^z_{4k+1}$ of $P_z$. 
By \cref{prp:non_transitive_augmenting} for the non-transitive face $f_c$, $P$ contains  either (i) the augmenting edge $a_c u^x_{4i+1}$,  or (ii) the augmenting edge $u^z_{4k}a_c$, or (iii) for a pair of consecutive vertices, say $u$ and $u'$, along the right boundary of $f_c$, the augmenting edges $ua_c$ and $a_cu'$. Cases (i) and (ii) contradict the existence of augmenting edges $v^x_{4i+1}u^x_{4i+1}$ and  $u^z_{4k}v^z_{4k}$ of $P$, respectively. Further, in case (iii) the augmenting edges of $P$ that belong to $P_x$, $P_y$, and $P_z$ imply that $u\notin\{ u^x_{4i}, u^y_{4j}, u^z_{4k}\}$ and $u'\notin\{ u^x_{4i+1}, u^y_{4j+1}, u^z_{4k+1}\}$. Hence $u=u^x_{4i+1}$ and $u'=u^y_{4j}$ holds, or $u=u^y_{4j+1}$ and $u'=u^z_{4k}$ holds. In both cases, the HP-completion of $H$ contains a generalized diamond with $v_s=u$ and $v_t=u'$ (the two directed paths on the sides of the edge $v_sv_t$ are $v_sa_cv_t$ and the path composed of the edges of the ladder and connector gadgets from $v_s$ to $v_t$), violating \cref{prop:generalized_diamond}.
%
%
Hence at least one of variables $x$, $y$ and $z$ must be \texttt{true}, contradicting our assumption that $c$~is~not~satisfied.
\end{proof}

\section{NP-completeness Proof for Planar Posets}\label{se:poset}
In this section, we show that the problem of determining whether a DAG has page-number~$2$ is \NP-complete also if the input graph is a planar poset. To show this, we increase the length of the double ladders used for the variable gadgets and modify the clause gadget, while keeping the same connector gadget. In particular for a variable $x \in \{x_1,\ldots,x_n\}$, the variable gadget $L_{x}$ is the double ladder of length $6d_x$, where $d_x$ is the degree of $x$ in the incidence graph $G_\phi$. 

For a clause $c$ of $\phi$ whose variables $x$, $y$ and $z$ appear in this order along $\mathcal{C}_\phi$ starting from $x_1$ towards $x_n$, define the clause-indices $i$, $j$ and $k$ of $c$ in the exact same way as in \cref{sec:st-complete}. Then, the clause gadget $C_c$ corresponding to $c$ consists of $11$ vertices ($a_c$, $a_c^1$, $a_c^2$, $s_c$, $s_c^1$, $s_c^2$, $t_c$, $t_c^1$, $t_c^2$, $v_c^1$ and $v_c^2$) and $21$ edges defined as follows. If $c$ is positive, these edges are 
$u^x_{6i+1}a_c$, $a_cu^z_{6k+4}$, $s_ca_c$, $s_cv_c^1$, $a_ct_c$, $v_c^2t_c$, $v_c^1v_c^2$, $v_c^2u^z_{6k+3}$,
$u^x_{6i+3}a_c^1$, $a_c^1u^y_{6j+2}$, $s_c^1a_c^1$, $s_c^1 u^x_{6i+5}$, $a_c^1t_c^1$, $u^y_{6j}t_c^1$, 
$u^y_{6j+3}a_c^2$, $a_c^2u^y_{6k+2}$, $s_c^2a_c^2$, $s_c^2u^y_{6i+5}$, $a_c^2t_c^2$, $u^z_{6k}t_c^2$ 
(green~in \cref{fig:clause_gadget_poset}); otherwise, they are 
$w^x_{6i+1}a_c$, $a_cw^z_{6k+4}$, $s_ca_c$, $s_cv_c^1$, $a_ct_c$, $v_c^2t_c$, $v_c^1v_c^2$, $v_c^2w^z_{6k+3}$,
$w^x_{6i+3}a_c^1$, $a_c^1w^y_{6j+2}$, $s_c^1a_c^1$, $s_c^1 w^x_{6i+5}$, $a_c^1t_c^1$, $w^y_{6j}t_c^1$, 
$w^y_{6j+3}a_c^2$, $a_c^2w^y_{6k+2}$, $s_c^2a_c^2$, $s_c^2w^y_{6i+5}$, $a_c^2t_c^2$, $w^z_{6k}t_c^2$.
By construction, $s_c$, $s_c^1$ and $s_c^2$ are sources, while $t_c$, $t_c^1$ and $t_c^2$ are sinks. 
The remaining vertices of the clause gadget ensure the absence of transitive edges, as required in the construction.

\begin{figure}[htb]
\centering
\includegraphics[page=4]{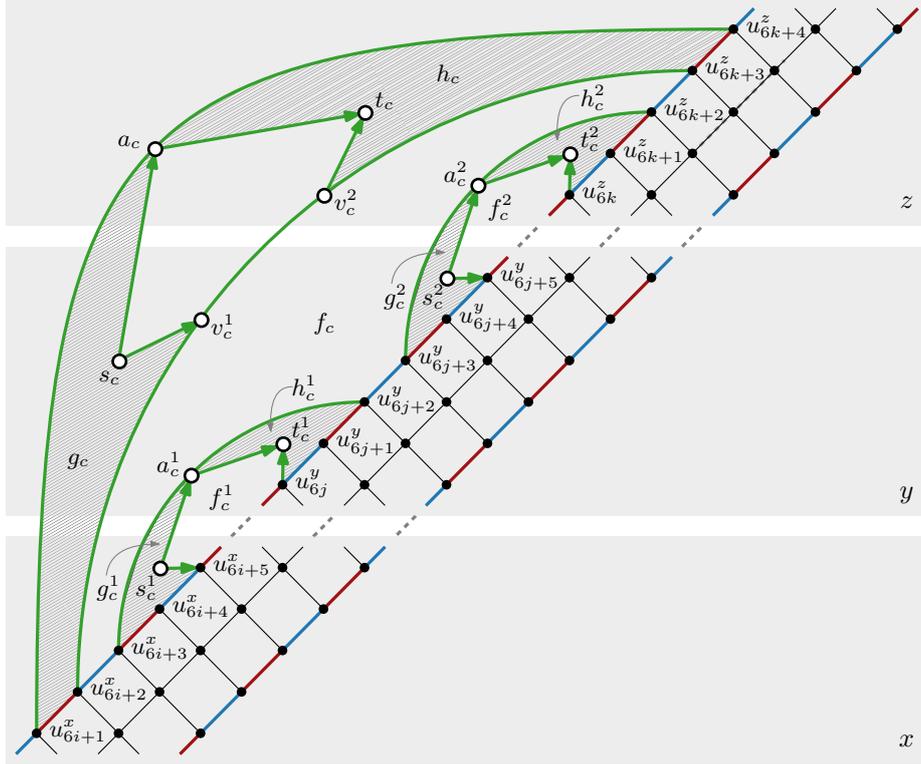}
\caption{Clause gadget with sources $s_c$, $s_c^1$ and $s_c^2$,  sinks $t_c$, $t_c^1$ and $t_c^2$, and without transitive edges. }
\label{fig:clause_gadget_poset}
\end{figure}

\begin{theorem}\label{thm:poset}
Recognizing whether a DAG has page-number~$2$ is \NP-complete, even if the input is a planar poset.
\end{theorem}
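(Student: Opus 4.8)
The plan is to follow the strategy of \cref{thm:st}, reducing from {\sc Planar Monotone 3-SAT} and reusing the variable and connector gadgets---now with double ladders of length $6d_x$---together with the redesigned clause gadget $C_c$ of \cref{fig:clause_gadget_poset}. Membership in \NP{} is argued verbatim as in \cref{thm:st}. Given $\phi$, I assemble $H'$ exactly as in \cref{sec:st-complete}, chaining $L_{x_1},\dots,L_{x_n}$ with connector gadgets and attaching one clause gadget per clause; the key difference is that each clause gadget carries its own local sources $s_c,s_c^1,s_c^2$ and sinks $t_c,t_c^1,t_c^2$, so that no global vertices $s$ and $t$ need be added. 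The first task is to show that $H'$ is a planar poset. Upward planarity together with a unique upward planar embedding follows from the rigidity of the double ladders (\cref{prp:embedded-ladder}), once one checks that the connector and clause attachments leave no embedding freedom, exactly as in \cref{thm:st}. The genuinely new point is the \emph{absence of transitive edges}, which I would verify by a local inspection of $H'$: no transitive edge arises within a variable or connector gadget, since the ladder structure forbids two distinct directed paths between the endpoints of any rung or connector edge; and the auxiliary vertices $v_c^1,v_c^2$ together with the extra sources and sinks of each clause gadget are placed precisely so that every edge incident to an anchor vertex is the unique directed path between its endpoints. This local check is the entire purpose of the $21$-edge redesign.

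For the forward direction I would proceed as before: from a satisfying assignment, take for each $L_{x_i}$ the subhamiltonian path of \cref{cor:double-ladder} dictated by the truth value of $x_i$, splice them through the connector gadgets using \cref{prop:connector}, and then thread the path through each clause gadget. The new ingredient is that a clause gadget contains three anchor vertices $a_c,a_c^1,a_c^2$ together with $v_c^1,v_c^2$ and the local sources/sinks, all of which must be inserted into $P$. Since at least one literal of $c$ is true, the corresponding base-edge is a consecutive true edge of $P$, and I would insert the relevant anchor vertex between the two endpoints of that base-edge, exactly as in the single-anchor case of \cref{thm:st}, while the local source/sink pairs and $v_c^1,v_c^2$ are appended as endpoints of the short augmenting paths indicated in \cref{fig:clause_gadget_poset}. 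Checking that these insertions are simultaneously realizable as a single crossing-free two-page layout is the routine, if tedious, part.

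For the reverse direction I would read off a truth assignment from the variable gadgets via \cref{cor:double-ladder} and argue by contradiction that every clause is satisfied. Suppose a positive clause $c=(x\lor y\lor z)$ is unsatisfied, so $x,y,z$ are all \emph{false} and, by \cref{cor:double-ladder}, each associated base-edge carries the false augmenting pattern of its $u$-rung. I would then apply \cref{prp:non_transitive_augmenting} to each non-transitive anchor face of $C_c$ in turn: in such a face the admissible augmenting edges of an anchor vertex are the two incident to it, or a single pair $u\,a,\,a\,u'$ with $u,u'$ consecutive on the right boundary, and the false patterns of the base-edges eliminate every option except one that places an anchor between vertices of \emph{distinct} variable ladders. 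As in \cref{thm:st}, this surviving option forces the HP-completion to contain a generalized diamond---the two-edge path through the anchor versus the ladder/connector path between the same endpoints---contradicting \cref{prop:generalized_diamond}. Hence some literal is true, and the assignment satisfies $\phi$.

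The main obstacle is the reverse-direction case analysis. With a single anchor vertex the argument collapses to one application of \cref{prp:non_transitive_augmenting}; here the three anchor vertices $a_c,a_c^1,a_c^2$ and the bypass vertices $v_c^1,v_c^2$ create several nested non-transitive faces, so one must check that no combined placement of all anchors avoids a generalized diamond when all three literals are false, while some placement succeeds as soon as one literal is true. I expect the bulk of the effort to be the bookkeeping of these face-by-face placements together with the one-time verification that $H'$ has no transitive edges---both conceptually straightforward given \cref{prp:non_transitive_augmenting,prop:generalized_diamond}, but requiring careful tracking of the many indices introduced by the $6d_x$-length ladders.
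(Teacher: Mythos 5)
Your high-level plan matches the paper's (same reduction, same gadgets), but the reverse direction as you sketch it has a genuine gap. You propose to ``apply \cref{prp:non_transitive_augmenting} to each non-transitive anchor face of $C_c$'' and to conclude with a generalized-diamond contradiction ``as in \cref{thm:st}''. Neither step goes through on the redesigned gadget. First, the decisive faces of $C_c$ are not the anchor faces but the six five-vertex faces around the local sources and sinks ($g_c,g_c^1,g_c^2$ and $h_c,h_c^1,h_c^2$ in \cref{fig:clause_gadget_poset}): each has two sources and two sinks, so it is not an $st$-face and \cref{prp:non_transitive_augmenting} says nothing about it. The paper needs two new properties (\cref{prp:clause_special_source,prp:clause_special_sink}) for exactly these faces; when all three literals are false they force the augmenting edges $u^x_{6i+3}s_c^1$, $u^y_{6j+3}s_c^2$, $t_c^1u^y_{6j+2}$ and $t_c^2u^z_{6k+2}$, which saturate the two $P$-edges of every ladder vertex on the right boundary of $f_c$. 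Without that saturation you cannot force the augmenting edge of $f_c$ onto $a_c^1$ or $a_c^2$, and you also have no account of how $s_c,s_c^1,t_c^1,\dots$ are visited by $P$ at all. Second, $f_c$ itself has left boundary $u^x_{6i+2},v_c^1,v_c^2,u^z_{6k+3}$ with two internal vertices, so it is not of the form $(v_s,w,v_t)$ required by \cref{prp:non_transitive_augmenting} either; the paper only uses that $f_c$ must contain some augmenting edge. Third, the final contradiction is not a generalized diamond between ``the two-edge path through the anchor and the ladder path between the same endpoints'': that configuration existed in \cref{thm:st} precisely because of the edges $u^x_{4i+1}u^y_{4j}$ joining distinct ladders, which the poset gadget deliberately replaces by length-two paths through $a_c^1,a_c^2$. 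The paper instead concludes by observing that $a_c^1$, being an endpoint of the augmenting edge in $f_c$, can host only one augmenting edge in the non-transitive face $f_c^1$, so case (iii) of \cref{prp:non_transitive_augmenting} is excluded there and cases (i)--(ii) are blocked by the false patterns.

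A secondary issue: you claim the uniqueness of the upward planar embedding follows ``exactly as in \cref{thm:st}''. There the argument was that only one face contains both a source and a sink; here every clause gadget contributes several internal faces containing both a source and a sink, so ruling out the alternative outer faces requires a separate argument (\cref{prp:poset_upward}). Finally, the forward direction needs more than inserting one anchor on the true base-edge: placing $s_c,a_c,v_c^1,v_c^2,t_c$ requires a three-way case analysis on the truth values of $x$ and $z$ (with a separate case when only $y$ is true), which your sketch does not anticipate.
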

\begin{proof}
Given an instance $\phi$ of {\sc Planar Monotone $3$-SAT}, we construct a plane DAG $H$  similarly as in the proof of \cref{thm:st}. The graph $H$ is upward planar, it has multiple sources and sinks, and it does not contain any transitive edges, i.e., it is a planar poset. In particular, the absence of transitive edges and the presence of multiple sources and sinks derive from the design of the clause gadget, as we have already mentioned. In the following, we focus on proving that $\phi$ is satisfiable if and only if $H$ is subhamiltonian.

\begin{figure}[b!]
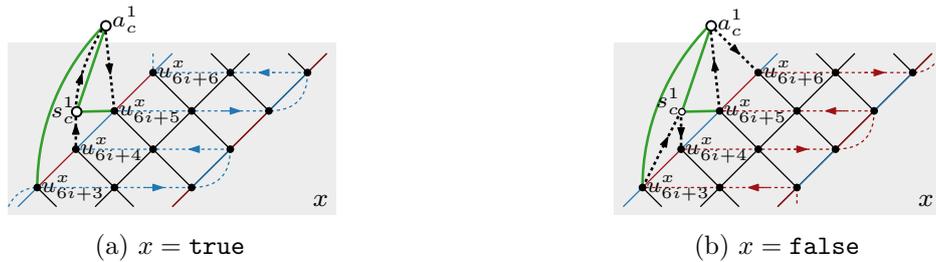

    \centering
    \begin{subfigure}[b]{0.49\textwidth}
    \centering
    \includegraphics[page=5,scale=1.2]{figures/clause_gadget.pdf}
    \caption{$x=\texttt{true}$}
    \label{fig:clause_gadget_poset_traversal_true}
    \end{subfigure}
    \hfil
    \begin{subfigure}[b]{0.49\textwidth}
    \centering
    \includegraphics[page=6,scale=1.2]{figures/clause_gadget.pdf}
    \caption{$x=\texttt{false}$}
    \label{fig:clause_gadget_poset_traversal_false}
    \end{subfigure}
    \caption{Extending subhamiltonian path $P$ to include $a_c^1$ and $s_c^1$ in the case in which $x$ is (a)~$\texttt{true}$, and (b)~$\texttt{false}$.}
\end{figure}

First, we show how to construct a subhamiltonian path for $H$, given $\phi$ admits a truth assignment. As the variable and connector gadgets are similar to those used for the proof of \cref{thm:st}, it suffices to show how to include the vertices of each clause gadget in the directed path $P$ that starts at $s$ and ends at $t$ passing through all the vertices of the variable gadgets. 
Consider a positive clause $c=(x\lor y\lor z)$ and let $i$, $j$, and $k$ be the clause-indices of $C_c$; the case of a negative clause is symmetric. We first show how to include in $P$ the vertices $a_c^1$ and $s_c^1$ (the vertices $a_c^2$ and $s_c^2$ can be included analogously).


\begin{itemize}
\item If $x$ is \texttt{true}, then the vertices $u^x_{6i+4}$ and $u^x_{6i+5}$ are consecutive in $P$. This allows us to extend $P$ so as to include $s_c^1$ and $a_c^1$ consecutively between $u^x_{6i+4}$ and $u^x_{6i+5}$, by adding to $P$ the  augmenting edges $u^x_{6i+4}s_c^1$ and $a_c^1 u^x_{6i+5}$; see \cref{fig:clause_gadget_poset_traversal_true}.
	
\item If $x$ is \texttt{false}, then vertices $u^x_{6i+3}$ and $u^x_{6i+4}$ are consecutive in $P$; the same holds for vertices $u^x_{6i+5}$ and $u^x_{6i+6}$. This allows us to include $s_c^1$ between $u^x_{6i+3}$ and $u^x_{6i+4}$, and $a_c^1$ between $u^x_{6i+5}$ and $u^x_{6i+6}$, by adding to $P$ the augmenting edges $u^x_{6i+3}s_c^1$, $s_c^1u^x_{6i+4}$, $u^x_{6i+5}a_c^1$, and $s_c^1u^x_{6i+4}$; see \cref{fig:clause_gadget_poset_traversal_false}. 
\end{itemize} 

\noindent Next, we show how to include in $P$ the vertex $t_c^1$ (the vertex $t_c^2$ can be included analogously).

\begin{itemize}
\item If $y$ is \texttt{true}, then the vertices $u^y_{6j}$ and $u^y_{6j+1}$ are consecutive in $P$. This allows us to extend $P$ so as to include $t_c^1$ between $u^y_{6j}$ and $u^y_{6j+1}$, by adding to $P$ the augmenting edge $t_c^1u^y_{6j+1}$.

\item If $y$ is \texttt{false}, then the vertices $u^y_{6j+1}$ and $u^y_{6j+2}$ are consecutive in $P$. This allows us to extend $P$ to include $t_c^1$ between $u^y_{6j+1}$ and $u^y_{6j+2}$, by adding to $P$ the augmenting edges $u^y_{6j+1}t_c^1$ and $t_c^1u^y_{6j+2}$.
\end{itemize}

\noindent We next focus on the remaining vertices of clause gadget $C_c$, namely $a_c$, $s_c$, $t_c$, $v_c^1$ and $v_c^2$. We distinguish three cases depending on the truth assignments for $x$ and $z$; see \cref{fig:clause_gadget_poset_xz_true,fig:clause_gadget_poset_x_true,fig:clause_gadget_poset_y_true}.

\begin{itemize}
    \item Suppose first that both $x$ and $z$ are  \texttt{true}; see \cref{fig:clause_gadget_poset_xz_true}. In this case, the vertices $u^x_{6i+2}$ and $u^x_{6i+3}$ of $L_x$ are consecutive along $P$; the same holds for the vertices $u^z_{6k+2}$ and $u^z_{6k+3}$ of $L_z$. This allows us to extend $P$ to include $s_c$, $a_c$, and $v_c^1$ consecutively between $u^x_{6i+2}$ and $u^x_{6i+3}$, by adding to $P$ the augmenting edges $u^x_{6i+2}s_c$, $a_cv_c^1$, and $v_c^1u^x_{6i+3}$, and to include $v_c^2$ and $t_c$ consecutively between $u^z_{6k+2}$ and $u^z_{6k+3}$, by adding to $P$ the augmenting edges $u^z_{6k+2}v_c^2$ and $t_cu^z_{6k+3}$. Note that the described extension of $P$ is independent of the truth assignment for $y$.

\begin{figure}[t]
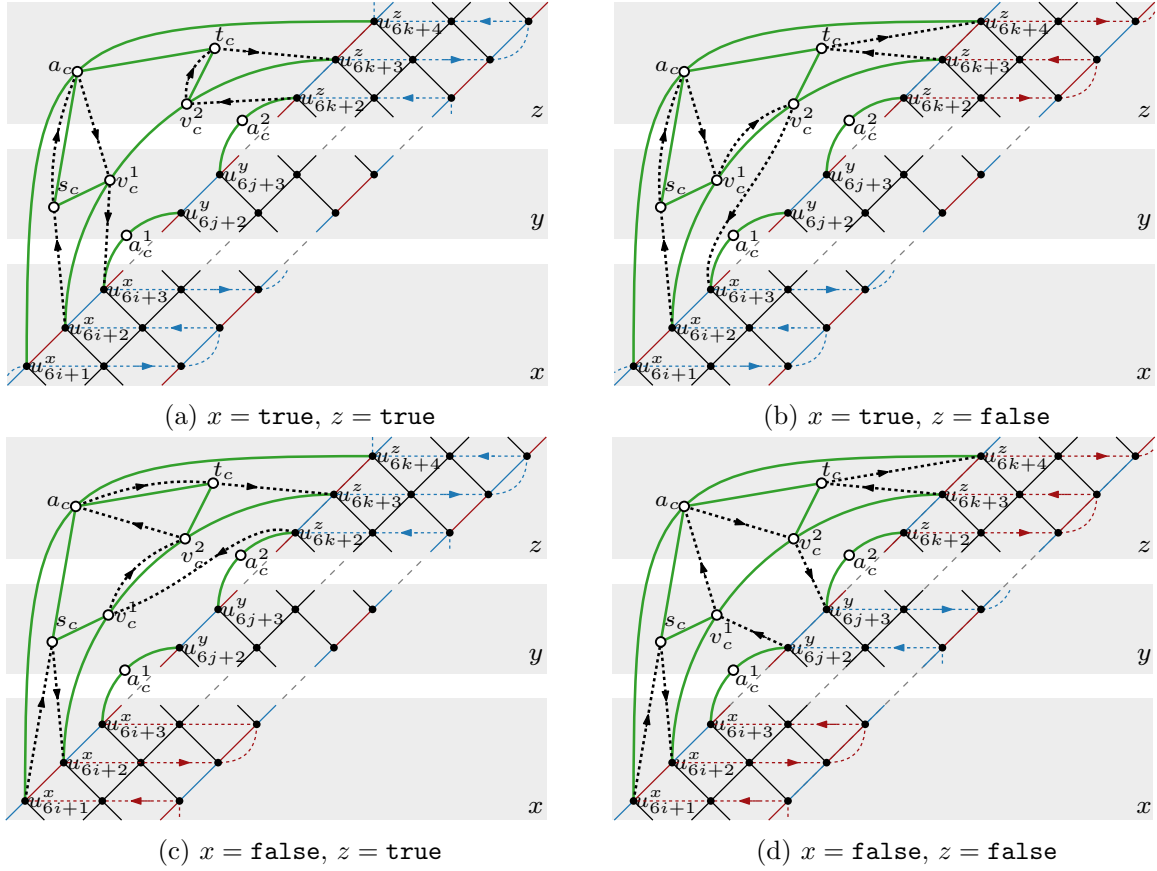

    \centering
    \begin{subfigure}[b]{.49\textwidth}
    \includegraphics[page=7,scale=1.2]{figures/clause_gadget.pdf}
    \caption{$x=\texttt{true}$, $z=\texttt{true}$}
    \label{fig:clause_gadget_poset_xz_true}
    \end{subfigure}
   \hfil
    \begin{subfigure}[b]{.49\textwidth}
    \includegraphics[page=8,scale=1.2]{figures/clause_gadget.pdf}
    \caption{$x=\texttt{true}$, $z=\texttt{false}$}
    \label{fig:clause_gadget_poset_x_true}
    \end{subfigure}
    
    \begin{subfigure}[b]{.49\textwidth}
    \includegraphics[page=11,scale=1.2]{figures/clause_gadget.pdf}
    \caption{$x=\texttt{false}$, $z=\texttt{true}$}
    \label{fig:clause_gadget_poset_z_true}
    \end{subfigure}
    \hfil
    \begin{subfigure}[b]{.49\textwidth}
    \includegraphics[page=9,scale=1.2]{figures/clause_gadget.pdf}
    \caption{$x=\texttt{false}$, $z=\texttt{false}$}
    \label{fig:clause_gadget_poset_y_true}
    \end{subfigure}
    \caption{Different cases that occur while extending subhamiltonian path $P$ to include the remaining vertices of a clause gadget (i.e., those different than $a_c^1$ and $s_c^1$).}\label{fig:clause_gadget_poset_traversal}
\end{figure}

    \item Suppose next that exactly one of $x$ and $z$ is \texttt{true}. We describe the case in which $x$ is \texttt{true} and $z$ is \texttt{false} (see \cref{fig:clause_gadget_poset_x_true}); the case in which $x$ is \texttt{false} and $z$ is \texttt{true} is analogous (see \cref{fig:clause_gadget_poset_z_true}). In this case, the vertices $u^x_{6i+2}$ and $u^x_{6i+3}$ of $L_x$ are consecutive along $P$; the same holds for the vertices $u^z_{6k+3}$ and $u^z_{6k+4}$ of $L_z$. This allows us to extend $P$ to include $s_c$, $a_c$, $v_c^1$, and $v_c^2$ consecutively between $u^x_{6i+2}$ and $u^x_{6i+3}$, by adding to $P$ the augmenting edges $u^x_{6i+2}s_c$, $a_cv_c^1$, and $v_c^2u^x_{6i+3}$, and to include $t_c$ between $u^z_{6k+3}$ and $u^z_{6k+4}$, by adding to $P$ the augmenting edges $u^z_{6k+3}t_c$ and $t_c u^z_{6k+4}$. Again, the extension is independent of the truth assignment for $y$.
    \item Suppose finally that both $x$ and $z$ are \texttt{false}; see \cref{fig:clause_gadget_poset_y_true}. Since $c$ is satisfied, $y$ is \texttt{true}. In this case, the vertices $u^x_{6i+1}$ and $u^x_{6i+2}$ of $L_x$ are consecutive along $P$; the same holds for the vertices $u^y_{6j+2}$ and $u^y_{6j+3}$ of $L_y$ and for the vertices $u^z_{6k+3}$ and $u^z_{6k+4}$ of $L_z$. This allows us to extend $P$ to include $s_c$ between $u^x_{6i+1}$ and $u^x_{6i+2}$, by adding to $P$ the augmenting edges $u^x_{6i+1}s_c$ and $s_c u^x_{6i+2}$, to include $v_c^1$, $a_c$, and $v_c^2$ consecutively between $u^y_{6j+2}$ $u^y_{6j+3}$, by adding to $P$ the augmenting edges $u^y_{6j+2}v_c^1$, $v_c^1 a_c$, $a_c v_c^2$, and $v_c^2 u^y_{6j+3}$, and to finally include $t_c$ between $u^z_{6k+3}$ and $u^z_{6k+4}$, by adding to $P$ the augmenting edges $u^z_{6k+3}t_c$ and $t_c u^z_{6k+4}$.
\end{itemize}

We now prove the other direction, that is, that if $H$ is subhamiltonian, then $\phi$ is satisfiable. We start by introducing three useful properties of $H$.

\begin{prp}\label{prp:poset_upward}
The DAG $H$ has a unique upward planar embedding (up to a flip).
\end{prp}
\begin{proof}
Since $G_\phi$ is planar, we have that $H$ admits an upward planar embedding $\mathcal{E}_{H}$ with $s$ and $t$ on its outer face.
Since the underlying graph of $H$ is a subdivision of a triconnected planar graph, it has a unique combinatorial embedding. Hence, any upward planar embedding of $H$ might differ from $\mathcal{E}_{H}$ only by the choice of the outer face. The only internal faces of $\mathcal{E}_{H}$ that are incident both to a source and to a sink of $H$ are the faces of a clause gadget $C_c$ incident to $s_c$ and $t_c$, or to $s^1_c$ and $t^1_c$, or to $s^2_c$ and $t^2_c$. Suppose, for a contradiction, that an upward planar embedding $\mathcal{E}_{H}'$ of $H$ exists in which the outer face is the face of $C_c$ incident to $s_c$ and $t_c$; the argument for the other two cases is analogous. Then the outer face of $\mathcal{E}_{H}'$ is delimited by the directed paths $s_ca_ct_c$ and $s_c v^1_c v^2_c t_c$. Note that $s_c$ is only incident to the outer face of $\mathcal{E}_{H}'$ and to an internal face whose incident vertices are $s_c$, $a_c$, $u^x_{6i+1}$, $u^x_{6i+2}$, and $v^1_c$ (the face labeled $g_c$ in \cref{fig:clause_gadget_poset}). Consider the directed graph $K$ obtained from $H$ by removing the vertex $s_c$ and let $\mathcal{E}_{K}'$ be the upward planar embedding of $K$ obtained from $\mathcal{E}_{H}'$ by removing $s_c$. Then the vertices incident to the outer face of $\mathcal{E}_{K}'$ are $a_c$, $u^x_{6i+1}$, $u^x_{6i+2}$, $v^1_c$, $v^2_c$, and $t_c$. However, none of these vertices is a source of $K$, which contradicts the fact that $\mathcal{E}_{K}'$ is an upward planar embedding. 
\end{proof}


\noindent By \cref{prp:poset_upward}, in the unique upward planar embedding $\mathcal{E}_H$ of $H$ there exist several faces, formed by clause gadgets, consisting of five vertices, out of which two are sources for the face, two are sinks for the face, while the fifth one is neither a source nor a sink for the face (refer, e.g., to the shaded in gray in \cref{fig:clause_gadget_poset}). Consider such a face $f$ and denote by $v_s^1$, $v_s^2$ its two sources, and by $v_t^1$, $v_t^2$ its two sinks; see \cref{fig:clause_special_face} for an illustration.  We call $f$ a \emph{\specialsource} if the edge $v_s^1v_s^2$ or the edge $v_s^2v_s^1$ can be added inside $f$ while preserving the upward planarity of $H$ (see \cref{fig:clause_true_special_source_face,fig:clause_false_special_source_face}); otherwise we call $f$ a \emph{\specialsink} (see \cref{fig:clause_true_special_sink_face,fig:clause_false_special_sink_face}). The faces that are denoted by $g_c$, $g_c^1$ and $g_c^2$ in \cref{fig:clause_gadget_poset} are \specialsources, while the ones denoted by $h_c$, $h_c^1$ and $h_c^2$ are \specialsinks.

\begin{figure}[htb]
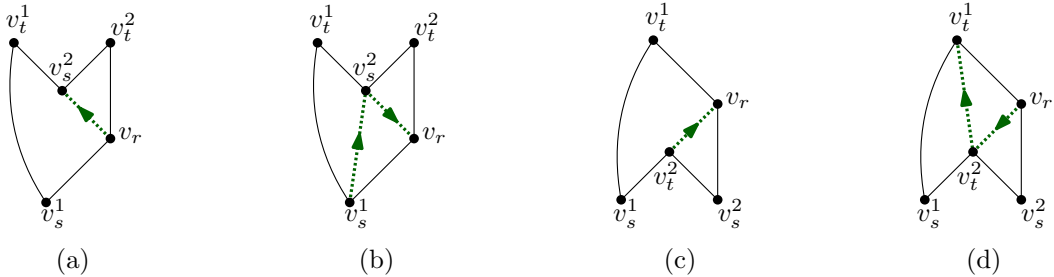

    \centering
    \begin{subfigure}[b]{.24\textwidth}
    \centering
    \includegraphics[page=5]{figures/faces_augmenting_edges.pdf}
    \caption{}
    \label{fig:clause_true_special_source_face}
    \end{subfigure}
    \hfil
    \begin{subfigure}[b]{.24\textwidth}
    \centering
    \includegraphics[page=4]{figures/faces_augmenting_edges.pdf}
    \caption{}
    \label{fig:clause_false_special_source_face}
    \end{subfigure}
    \hfil
    \begin{subfigure}[b]{.24\textwidth}
    \centering
    \includegraphics[page=7]{figures/faces_augmenting_edges.pdf}
    \caption{}
    \label{fig:clause_true_special_sink_face}
    \end{subfigure}
    \hfil
    \begin{subfigure}[b]{.24\textwidth}
    \centering
    \includegraphics[page=6]{figures/faces_augmenting_edges.pdf}
    \caption{}
    \label{fig:clause_false_special_sink_face}
    \end{subfigure}
    \caption{Possible augmenting edges of (a-b) a \specialsource, and (c-d) a \specialsink.}\label{fig:clause_special_face}
\end{figure}

\begin{prp}\label{prp:clause_special_source}
Consider any \specialsource $f$ of $\mathcal{E}_H$ with sources $v_s^1$, $v_s^2$ and sinks $v_t^1$, $v_t^2$ in which the edge $v_s^1v_s^2$ can be added inside $f$ while preserving the upward planarity of $H$. Any subhamiltonian path $P$ for $H$ contains either (i) only the augmenting edge $v_rv_s^2$, or (ii) the two augmenting edges $v_s^1v_s^2$ and $v_s^2v_r$, where $v_r$ is the fifth vertex of $f$.
\end{prp}
\begin{proof}
Since $f$ is an internal face of $\mathcal{E}_H$, vertex $v_s^2$ can be reached only from $v_r$ or $v_s^1$. The first case (see \cref{fig:clause_true_special_source_face}) yields the augmenting edge $v_rv_s^2$ of part~(i) of the statement. In this case, no other augmenting edge can be added; indeed, the edge $v_s^1v_s^2$ would let $v_s^2$ have two incoming edges in $P$, the edge $v_rv_t^1$ would let $v_r$ have two outgoing edges in $P$, and the edge $v_t^1v_r$ would create a directed cycle $v_t^1v_rv_s^2v_t^1$. In the second case (see \cref{fig:clause_false_special_source_face}), vertices $v_s^1,v_s^2,v_r,v_t^2$ create a rhombus and, by \cref{prop:rhombus_consec}, the augmenting edge $v_s^2v_r$ or $v_rv_s^2$ must also be present, however the latter would let $v_s^2$ have two incoming edges in $P$.
\end{proof}

\noindent Symmetrically, one can prove the following.

\begin{prp}\label{prp:clause_special_sink}
Consider any \specialsink $f$ with sources $v_s^1$, $v_s^2$ and sinks $v_t^1$, $v_t^2$ of $\mathcal{E}_H$ in which the edge $v_t^2v_t^1$ can be added inside $f$ while preserving the upward planarity of $H$. Any subhamiltonian path $P$ for $H$ contains either (i) only the augmenting edge $v_t^2 v_r$, or (ii) the two augmenting edges $v_rv_t^2$ and $v_t^2v_t^1$, where $v_r$ is the fifth vertex of $f$.
\end{prp}

\begin{figure}[t]
    \centering
    \includegraphics[page=12]{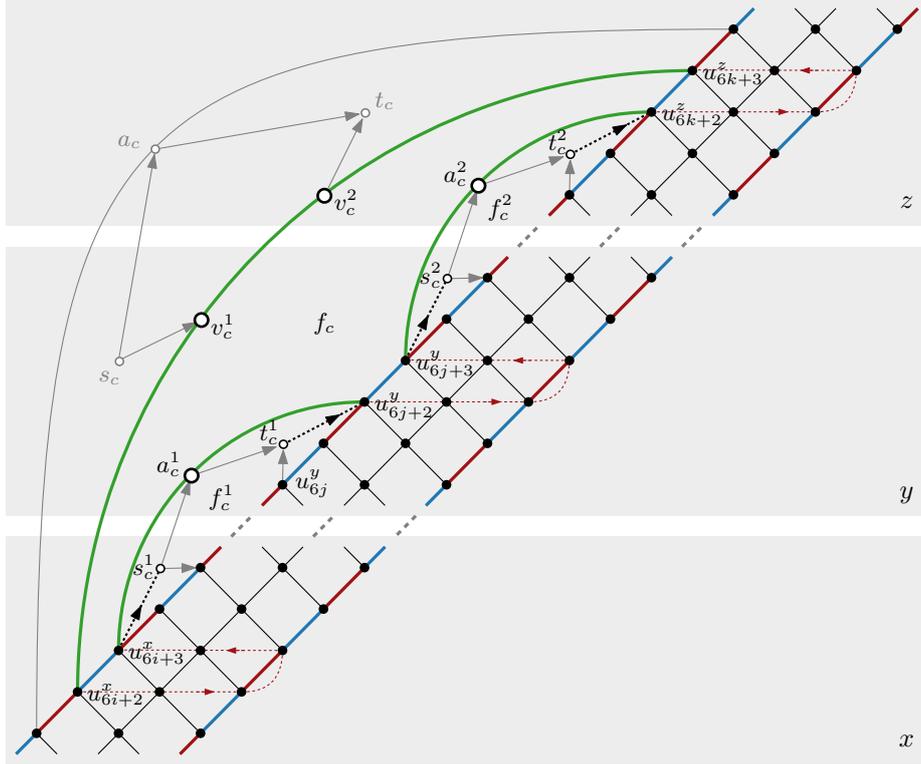}
    \caption{An unsatisfied positive clause $c=(x\lor y\lor z)$. Vertices $u^x_{6i+3}$, $u^y_{6j+2}$, $u^y_{6j+3}$ and $u^z_{6k+2}$ are incident to two augmenting edges of $P$ that are not in the interior of the anchor face $f_c$.}
    \label{fig:clause_gadget_poset_unsat}
\end{figure}

\noindent To complete the proof of the theorem, assume that there exists a subhamiltonian path $P$ for $H$, from $s$ to $t$. We compute a truth assignment as described in the proof of \cref{thm:st}. It suffices to prove that all the clauses of $\phi$ are satisfied. Let $c=(x\lor y\lor z)$ be a positive clause, where $x$, $y$ and $z$ are all \texttt{false}; the case of an unsatisfied negative clause is analogous. By \cref{prp:clause_special_source}, in the \specialsource $g_c$ there exists either the augmenting edge $u^x_{6i+2}s_c$ or the augmenting edges $u^x_{6i+1}s_c$ and $s_cu^x_{6i+2}$. As $x$ is \texttt{false}, $P$ already contains the augmenting edge $u^x_{6i+2}v^x_{6i+2}$, which implies that $u^x_{6i+1}s_c$ and $s_cu^x_{6i+2}$ are augmenting edges of $P$. Leveraging \cref{prp:clause_special_source,prp:clause_special_sink}, we can argue similarly for all other \specialsources and \specialsinks of $C_c$, as shown in \cref{fig:clause_gadget_poset_unsat}. 
In particular, $u^x_{6i+3}s_c^1$, $u^y_{6j+3}s_c^2$, $t_c^1u^y_{6j+2}$ and $t_c^2u^z_{6k+2}$ are augmenting edges of $P$. Now consider the face $f_c$, whose left boundary contains the vertices $u^x_{6i+2}$, $v_c^1$, $v_c^2$ and $u^z_{6k+3}$, and whose right boundary contains the vertices $u^x_{6i+2}$, $u^x_{6i+3}$, $a_c^1$, $u^y_{6j+2}$, $u^y_{6j+3}$, $a_c^2$, $u^z_{6k+2}$ and $u^z_{6k+3}$. Since~$f_c$ is non-transitive, there exists at least one augmenting edge of $P$, say $e$, inside $f_c$, connecting a vertex of its left boundary to a vertex of its right boundary. We already proved that edge $u^x_{6i+3}s_c^1$ is an augmenting edge of $P$ in $f_c^1$, and since $x$ is \texttt{false}, the same holds for $v^x_{6i+3}u^x_{6i+3}$. Hence, we have identified the two edges incident to $u^x_{6i+3}$ that belong to $P$, and none of them can be the edge $e$. The same holds for vertices $u^y_{6j+2}$, $u^y_{6j+3}$ and $u^z_{6k+2}$. As a consequence $e$ connects a vertex of the left boundary of $f_c$ to $a_c^1$ or $a_c^2$. Assume that $a_c^1$ is an endpoint of $e$, as the case in which $a_c^2$ is an endpoint of $e$ is analogous. As $a_c^1$ belongs to the non-transitive face $f^1_c$, by \cref{prp:non_transitive_augmenting}, there is exactly one augmenting edge of $P$ in $f_c^1$, and this edge is either $a_c^1u^x_{6i+5}$ or $u^y_{6j}a_c^1$. However, this is not possible since $x$ and $y$ are \texttt{false} and $P$ contains augmenting edges $v^x_{6i+5}u^x_{6i+5}$ and $u^y_{6j}v^y_{6j}$; a contradiction.
We conclude that $c$ is satisfied, thus completing the proof of the theorem.
\end{proof}

\section{Conclusions} \label{sec:conclusions}

In this paper, settling a long-standing conjecture of Heath and Pemmaraju~\cite{DBLP:journals/siamcomp/HeathP99} and improving upon previous results by Heath and Pemmaraju~\cite{DBLP:journals/siamcomp/HeathP99} and by Binucci et al.~\cite{binucci_et_al2019}, we have proved that deciding whether a DAG has page-number~$2$ is \NP-complete. Indeed, we have proved that the problem is \NP-hard even for $st$-planar graphs and for planar posets.

Whether our two hardness results can be combined into a single, and stronger, hardness result remains open. That is: What is the complexity of deciding whether an $st$-planar graph without transitive edges has page-number~$2$?

\bibliographystyle{splncs03}
\bibliography{references}

\end{document}